\newcommand{\FOLID}{\mathrm{FOL}_{\mathrm{ID}}}
\newcommand{\LKomega}{\mathrm{LKID}^\omega}
\newcommand{\CLK}{\mathrm{CLKID}^\omega}
\newcommand{\CSLomega}{\mathrm{CSL}^\omega}
\newcommand{\pr}{{\rm Pr}}
\newcommand{\prplus}{\mathrm{Pr}_{+}}
\newcommand{\prvar}{\mathrm{Pr}_{+}^\mathrm{var}}
\newcommand{\prminus}{\mathrm{Pr}_{-}}
\newcommand{\promega}{\mathrm{Pr}^{\omega}}
\newcommand{\prd}[1]{\mathrm{Pr}_{#1}^\omega}
\newcommand{\f}{f_-}
\newcommand{\fomega}{f^\omega}
\newcommand{\fd}[1]{f_{#1}^\omega}
\newcommand{\FV}{{\rm FV}}
\newcommand{\Ent}{{\rm Seq}}
\newcommand{\prsubst}{{\Theta}}
\newcommand{\psclosure}{{\mathcal{C}_{ps}}}
\newcommand{\dom}{{\rm dom}}
\newcommand{\img}{{\rm img}}
\newcommand{\axiom}{\mathrm{Axiom}}
\newcommand{\wk}{\mathrm{Wk}}
\newcommand{\cut}{\mathrm{Cut}}
\newcommand{\subst}{\mathrm{Subst}}
\newcommand{\notL}{\lnot \mathrm{L}}
\newcommand{\notR}{\lnot \mathrm{R}}
\newcommand{\veeL}{\vee \mathrm{L}}
\newcommand{\veeR}{\vee \mathrm{R}}
\newcommand{\wedgeL}{\wedge \mathrm{L}}
\newcommand{\wedgeR}{\wedge \mathrm{R}}
\newcommand{\toL}{\to \mathrm{L}}
\newcommand{\toR}{\to \mathrm{R}}
\newcommand{\forallL}{\forall \mathrm{L}}
\newcommand{\forallR}{\forall \mathrm{R}}
\newcommand{\existsL}{\exists \mathrm{L}}
\newcommand{\existsR}{\exists \mathrm{R}}
\newcommand{\eqL}{= \mathrm{L}}
\newcommand{\eqR}{= \mathrm{R}}
\newcommand{\ul}{\mathrm{UL}}
\newcommand{\ur}{\mathrm{UR}}
\newcommand{\Np}{\mathrm{N}}
\newcommand{\Op}{\mathrm{O}}
\newcommand{\Ep}{\mathrm{E}}
\newcommand{\system}{\mathcal{S}}
\begin{document}
\title{Admissibility of Substitution Rule in Cyclic-Proof Systems}
%
%
\author{Kenji Saotome\inst{1}\orcidID{0009-0008-6917-5673} \and Koji Nakazawa\inst{1}\orcidID{0000-0001-6347-4383}}
\authorrunning{K.~Saotome and K.~Nakazawa}
%
\institute{Nagoya University, Furo-cho, Chikusa-ku, Nagoya 464-8601, Japan\\
\email{\{saotomekenji, knak\}@sqlab.jp}
}
\maketitle              
\begin{abstract}
This paper investigates the admissibility of the substitution rule in cyclic-proof systems.
The substitution rule complicates theoretical case analysis and increases computational cost in proof search since every sequent can be a conclusion of an instance of the substitution rule; hence, admissibility is desirable on both fronts.
While admissibility is often shown by local proof transformations in non-cyclic systems, such transformations may disrupt cyclic structure and do not readily apply.
Prior remarks suggested that the substitution rule is likely non-admissible in the cyclic-proof system $\CLK$ for first-order logic with inductive predicates. 
In this paper, we prove admissibility in $\CLK$, assuming the presence of the cut rule.
Our approach unfolds a cyclic proof into an infinitary form, lifts the substitution rules, and places back edges to construct a cyclic proof without the substitution rule.
If we restrict substitutions to exclude function symbols, the result extends to a broader class of systems, including cut-free $\CLK$ and cyclic-proof systems for the separation logic.

\keywords{Cyclic-proof system  \and Substitution rule \and Admissibility \and First-order logic \and Separation logic.}
\end{abstract}
\section{Introduction}
The framework of \emph{cyclic proofs}~\cite{Brotherston06,Brotherston11,Doumane17} extends traditional proof trees by introducing cyclic structures to reasoning about inductively defined predicates. 
To ensure soundness, only those proofs that satisfy the \emph{global trace condition} are admitted. 
The condition guarantees that every infinite path makes progress infinitely often by unfolding an inductive predicate.
For example, the derivation
\begin{center}
    \includegraphics[width = .375\columnwidth]{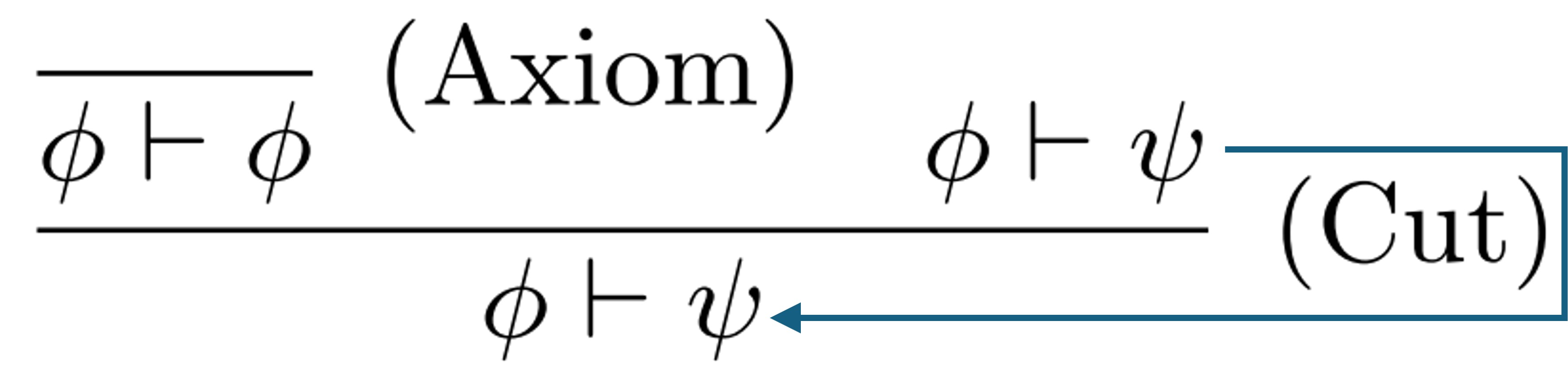}
\end{center}
fails to satisfy the global trace condition, and the conclusion is invalid in general.
On the other hand, the proof shown in Fig.~\ref{fig: N E O} is a
cyclic proof in $\CLK$~\cite{Brotherston06}, a cyclic-proof system for the first-order logic with inductive predicates.
\begin{figure}
    \centering
    \includegraphics[width=0.75\linewidth]{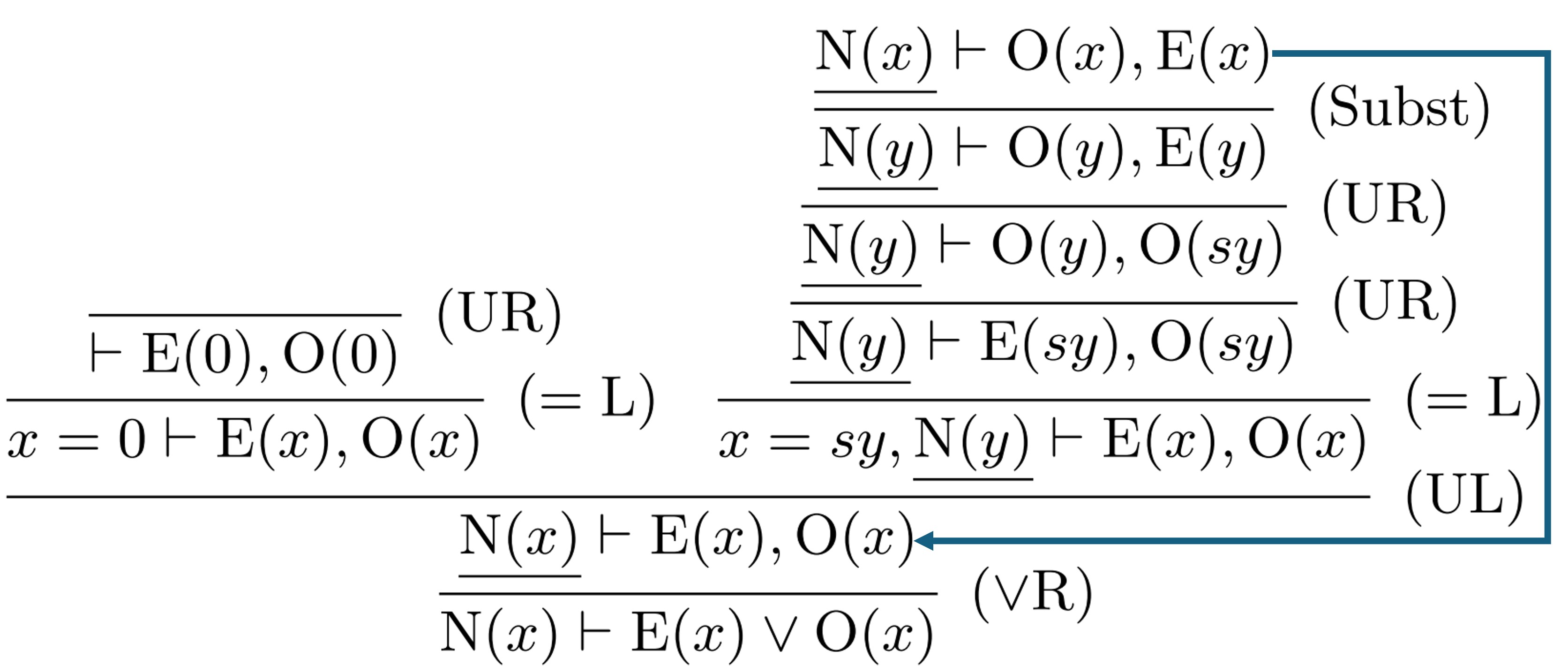}
    \caption{A cyclic proof of $\Np(x)\vdash\Ep(x)\vee\Op(x)$}
    \label{fig: N E O}
\end{figure}
Intuitively, $\Np(x)$, $\Ep(x)$, and $\Op(x)$ respectively represent that $x$ is a natural number, an even number, and an odd number.
The leaf node $\Np(x)\vdash \Op(x),\Ep(x)$ is called \emph{bud}, and the internal node associated with it is called its \emph{companion}. 
A cycle is formed through the connection between a bud and its companion.
This proof satisfies the global trace condition since every infinite path makes progress infinitely often by unfolding the underlined inductive predicate $\Np$ in the antecedents at $(\ul)$.

When we search for proof from the conclusion to premises, unlike the ordinary induction rule, cyclic
proofs may be found without a prior choice of the induction hypothesis.
Consequently, avoiding heuristic searches for induction hypotheses, cyclic proofs have been investigated in the context of proof search~\cite{Brotherston11b,Brotherston12,Chu15,Ta16,Tatsuta19}. 

The substitution rule
\[
\infer[(\subst)]{\Gamma[\theta]\vdash\Delta[\theta]}{\Gamma\vdash\Delta}
\]
is often introduced in proof systems.
Some cyclic-proof systems contain the substitution rule as an explicit inference rule~\cite{Brotherston06,Zhang22,Masuoka23}.
On the other hand, some cyclic-proof systems implicitly introduce substitutions into the bud-companion correspondence~\cite{Ulrich02,Kimura19,Saotome24}.
That is, instead of requiring that the bud and its companion be syntactically identical, they require that the bud be a substitution instance of the companion. 
The substitution rule helps to construct cyclic structures.
In the proof in Fig.~\ref{fig: N E O}, a fresh variable $y$ is introduced by $(\ul)$, and this $y$ is unified with $x$ by $(\subst)$ to identify the sequents $\Np(x)\vdash\Ep(x),\Op(x)$ and $\Np(y)\vdash\Op(y),\Ep(y)$.

The admissibility of the substitution rule is meaningful both from theoretical and practical perspectives.
Theoretically, since the substitution rule is always applicable, it complicates proof analysis.  Consequently, establishing the admissibility of the substitution rule enables a more tractable proof
analysis.
On the practical side, when the substitution rule is admissible, implementations are afforded greater flexibility in how and when to apply substitutions.  
For example, one might reserve the application of the substitution rule for specific moments, such as when considering bud-companion relationships, when merging with already-proven theorems, or even choose never to apply substitution at all.  
Since many substitutions can be applied at each step, unrestricted application of the substitution rule increases computational cost.

In many non-cyclic-proof systems, the substitution rule is admissible. 
This is achieved by (i) lifting the substitution rule and (ii) eliminating the substitution rule. 
In the lifting phase (i), one transforms a derivation tree
{\small
\[
\infer[(\subst)]{\Gamma[\theta]\vdash\Delta[\theta]}
    {\infer[(R)]{\Gamma\vdash\Delta}{\Gamma_1\vdash\Delta_1 &\cdots
    &\Gamma_n\vdash\Delta_n}}
\mbox{{\normalsize\quad into\quad }}
\infer[(R)]{\Gamma[\theta]\vdash\Delta[\theta]}
    {\infer[(\subst)]{\Gamma_1[\theta_1]\vdash\Delta_1[\theta_1]}{\Gamma_1\vdash\Delta_1}
    &\cdots
    &\infer[(\subst)]{\Gamma_n[\theta_n]\vdash\Delta_n[\theta_n]}{\Gamma_n\vdash\Delta_n}
    }.
\]
}
 In the elimination phase (ii), one transforms a derivation tree
 {\small
\[
\infer[(\subst)]{\Gamma[\theta]\vdash\Delta[\theta]}
    {\infer[(\axiom)]{\Gamma\vdash\Delta}{}}
\mbox{{\normalsize\quad into\quad }}
\infer[(\axiom)]{\Gamma[\theta]\vdash\Delta[\theta]}
    {}.
\]    
}

In cyclic-proof systems, however, the admissibility of the substitution rule is still unclear. At least, it is not possible to remove $(\subst)$ in the same way as in non-cyclic-proof systems.
In the lifting phase, the bud-companion relationship may be destroyed if the sequent $\Gamma\vdash\Delta$, which is deleted during the transformation, is a companion.
This is not merely a temporary effect during proof transformation; for example, as illustrated in Fig.~\ref{fig: lifting subst breaking}, even if the substitutions are lifted to the bud, it may still fail to choose the corresponding companion.
In the elimination phase, it is not possible to eliminate the substitution when the leaf of a derivation tree is a bud. 
\begin{figure}
\centering
\includegraphics[width=0.7\textwidth]{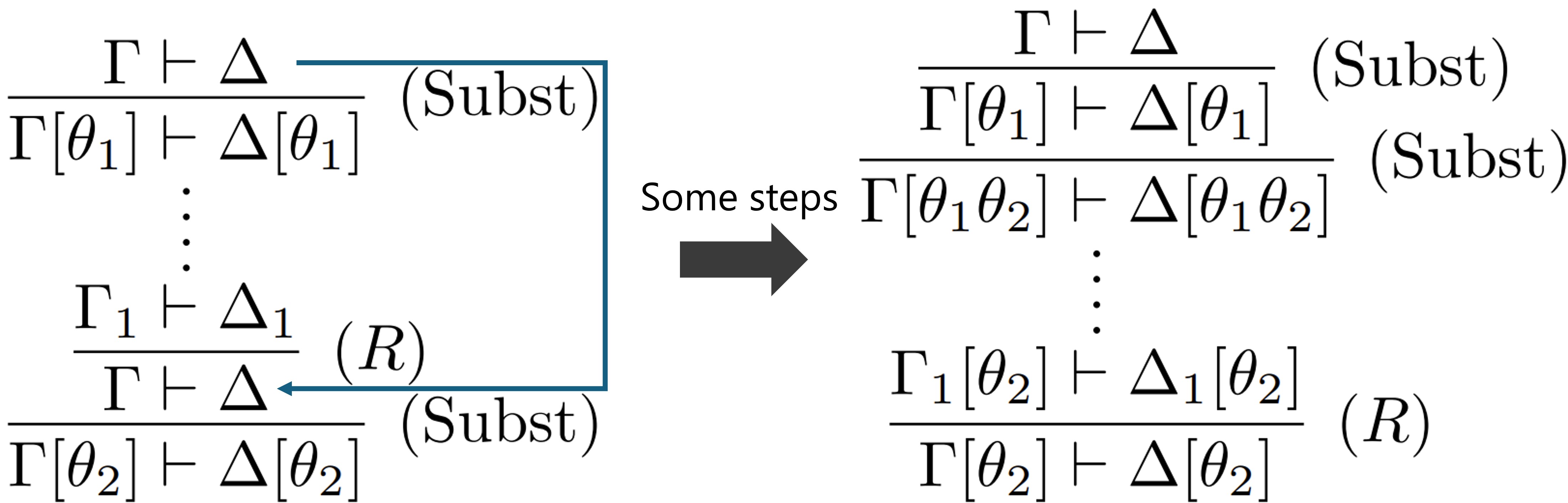}
\caption{Lifting $(\subst)$ may break bud-companion correspondence}
\label{fig: lifting subst breaking}
\end{figure}

Brotherston~\cite{Brotherston06} raised the question of whether the substitution rule is admissible in the cyclic-proof system $\CLK$, suggesting that it is likely not admissible in general, at least in the cut-free setting.

This paper proves the admissibility of the substitution rule in the cyclic-proof system $\CLK$ for the first-order logic with inductive predicates~\cite{Brotherston11}, assuming the presence of the cut rule.
Furthermore, if function symbols with positive arity are not introduced by the substitution rule, its applications can be eliminated even without the cut rule.  
This result can be applied to other proof systems as well.  
For example, in the cyclic-proof system for separation logic, where function symbols except the constant $\mathrm{nil}$ are generally not introduced, the substitution rule is admissible regardless
of the presence of the cut rule.

The proof 
\begin{center}
\includegraphics[width =0.88\textwidth]{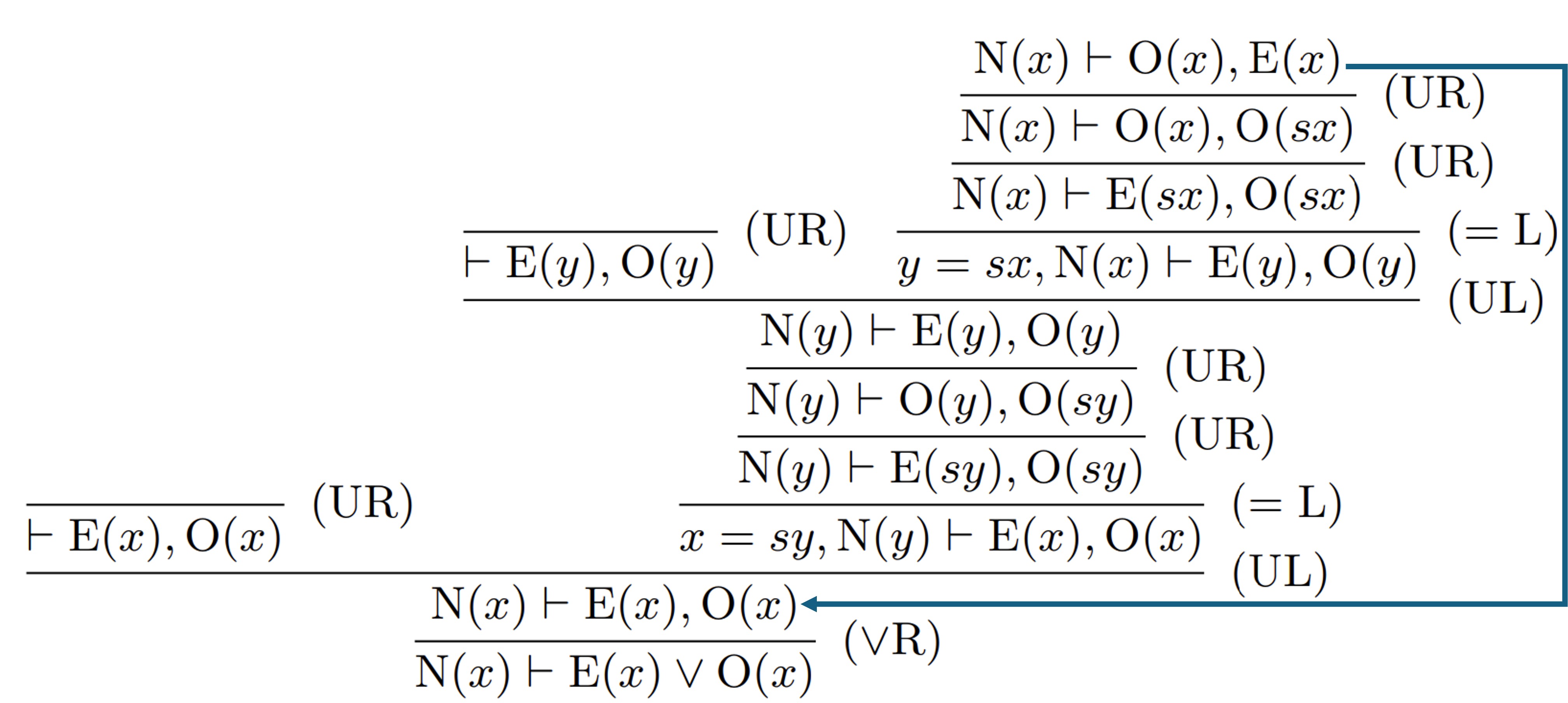}
\end{center}
is one of the cyclic proofs of $\Np(x)\vdash\Ep(x)\vee\Op(x)$ without substitution rules, that is constructed by unfolding the cyclic proof in Fig.~\ref{fig: N E O}.
Regarding the new variable $y$ introduced by the lower $(\ul)$, the original proof unifies it with the variable $x$ using $(\subst)$. 
In the new proof, this is accomplished by choosing $x$ itself as the fresh variable in the upper $(\ul)$.
Building on this observation, we consider unfolding a cyclic proof into an infinite proof, wherein substitution rules can be systematically lifted upward. 
After performing this lifting to a sufficiently great height in the infinite proof tree, we reconstruct a cyclic proof without substitutions.

To guarantee the global trace condition, we choose buds and their companions in the reconstructed cyclic proof only when they have the same entailment occurrence as in the original proof.
This careful choice prevents the emergence of unintended infinite paths.


{\bfseries Structure of the paper.} 
Section~2 introduces a cyclic-proof system $\CLK$ and an infinitary-proof system $\LKomega$ for the first-order logic $\FOLID$ with inductive predicates. 
In Section~3, we prove the admissibility of the substitution rule in $\CLK$.
In Section~4, we discuss the admissibility of the substitution rule in other proof systems.
In Section~5, we conclude.
\section{Cyclic-Proof System for First-Order Logic with Inductive Predicates}

This section recalls the first-order logic with inductive predicates, together with the cyclic- and infinitary-proof systems, as proposed by Brotherston et al.~\cite{Brotherston06,Brotherston11}.
For the unfolding rules of inductive predicates, however, we follow the style of Kimura et al.~\cite{Kimura19}, in which the rule names do not explicitly mention the inductive predicates.

\subsection{First-Order Logic $\FOLID$ with Inductive Predicates}
We define the formula of $\FOLID$.
A \emph{function symbol}, denoted by $f,g,\ldots$, has a fixed arity. A nullary function symbol, denoted by $c,c_1,\ldots$, is called a \emph{constant symbol}. 
We denote a \emph{variable} by $x,y,z,\ldots$. We use $\vec{x}$ for a finite sequence of variables. 
A \emph{term}, denoted by $t, u, \ldots$, is defined recursively as either a variable or an expression of the form $f(\vec{t})$, where $\vec{t}$ is a finite sequence of terms.
We write $t(x)$ to make explicit that the variable $x$ occurs in the term $t$; analogously, we use the notation $\vec{t}(\vec{x})$ for the same purpose.
$=$ is introduced as a special two-argument predicate symbol.
We use $P,P_1,\ldots$ for \emph{inductive predicate symbols}. 
We also use $Q, Q_1,\ldots$ for \emph{ordinary predicate symbols}. 
Each predicate symbol has a fixed arity. 

\begin{definition}[Formulas]
The formulas of $\FOLID$, denoted $\phi,\psi,\ldots$, are as follows:
\[
\phi::= P(\vec{t})\mid Q(\vec{u})
\mid t=u
\mid\lnot\phi
\mid\phi\vee\phi
\mid\phi\wedge\phi
\mid\phi\to\phi
\mid \exists x.\phi
\mid\forall x.\phi,  
\]
where the lengths of $\vec{t}$ and $\vec{u}$ are the arities of $P$ and $Q$, respectively.

Formulas in $\FOLID$ are identified up to $\alpha$-equivalence under variable binding by $\forall$ and $\exists$. For formulas $\phi$ and $\psi$ that are syntactically equal, that is, $\alpha$-equivalent,
we denote $\phi\equiv\psi$.
\end{definition}

The substitutions of terms, denoted by $\theta,\theta_1,\ldots$, are finite mappings from term variables to terms. 
The formula $\phi[\theta]$ denotes the result of the capture-avoiding substitution $\theta$
to free variables in $\phi$.
$\theta_1\theta_2$ denotes the composition of the substitutions $\theta_1$ and $\theta_2$, and it holds that $\phi[\theta_1\theta_2]\equiv(\phi[\theta_1])[\theta_2]$.
$\theta[x_1\to y_1,\ldots,x_n\to y_n]$ denotes the substitution obtained
from $\theta$ by changing the image of $x_i$ to $y_i$ for $1\leq i\leq
n$, respectively. 
For a finite set of formulas $\Gamma$, $\Gamma[\theta]$ is a finite set such that $\Gamma[\theta]=\{\phi[\theta]\mid \phi\in\Gamma\}$.

Specifically, a substitution whose image includes only variables and constants is called an \emph{atomic substitution}. On the other hand, a substitution whose image contains a term with a function symbol of positive arity is called a \emph{composite substitution}.
The domain and the image of a mapping $f$ are denoted by $\dom(f)$ and $\img(f)$, respectively. 

The definition of each inductive predicate $P(\vec{t})$ is given by a definition set.
\begin{definition}[Definition sets]
An inductive definition set $\Phi$ is a finite set of productions in the following form:
\[
\infer[]{P(\vec{t})}{Q_1(\vec{u_1})&\cdots& Q_n(\vec{u_n})&P_{1}(\vec{t_1})&\ldots&P_{m}(\vec{t_m})},
\]
where each $Q_i(1\leq i\leq n)$ is an ordinary predicate symbol and each $P_j(1\leq j\leq m)$ is an inductive predicate symbol.
\end{definition}

\begin{example}
The following are the productions for $\Np$, $\Ep$, and $\Op$, which represent natural numbers, even numbers, and odd numbers, respectively.
\[
\infer[]{\Np(0)}{},\qquad
\infer[]{\Np(sx)}{\Np(x)},\qquad
\infer[]{\Ep(0)}{},\qquad
\infer[]{\Ep(sx)}{\Op(x)}\mbox{, and}\qquad
\infer[]{\Op(sx)}{\Ep(x)}.
\]
\end{example}

We consider the \emph{standard model}~\cite{Brotherston06,Brotherston11} as the semantics.
\begin{definition}[Sequents]
Let $\Gamma$ and $\Delta$ be finite sets of formulas.
A sequent, denoted by $e$, is defined as $\Gamma\vdash\Delta$.
We say that $\Gamma\vdash\Delta$ is \emph{valid} with respect to the standard model.
\end{definition}

\subsection{Infinitary-Proof System $\LKomega$}
In this subsection, we introduce the infinitary-proof system $\LKomega$ used to establish the admissibility of the substitution rule in the cyclic-proof system $\CLK$.

First, we define the inference rules of $\LKomega$. 
These inference rules are also used in the cyclic-proof system $\CLK$.
\begin{definition}[Inference rules]
The inference rules of $\LKomega$ and $\CLK$ except for inductive predicates unfolding are as follows:
\[
\infer[\Gamma\cap\Delta\neq\emptyset (\axiom)]{\Gamma\vdash\Delta}{}\qquad
\infer[\Gamma'\subseteq \Gamma,\Delta'\subseteq \Delta(\wk)]{\Gamma\vdash\Delta}{\Gamma'\vdash\Delta'}
\]
\[
\infer[(\cut)]{\Gamma\vdash\Delta}{\Gamma\vdash\phi,\Delta &\Gamma,\phi\vdash\Delta}\quad
\infer[(\subst)]{\Gamma[\theta]\vdash\Delta[\theta]}{\Gamma\vdash\Delta}\quad
\infer[(\notL)]{\Gamma,\lnot\phi\vdash\Delta}{\Gamma\vdash\phi,\Delta}\quad
\infer[(\notR)]{\Gamma\vdash\Delta,\lnot\phi}{\Gamma,\Delta\vdash\phi}
\]
\[
\infer[(\veeL)]{\Gamma,\phi\vee\psi\vdash\Delta}
    {\Gamma,\phi\vdash\Delta &\Gamma,\psi\vdash\Delta} \qquad
\infer[(\veeR)]{\Gamma\vdash\Delta,\phi\vee\psi}
    {\Gamma\vdash\Delta,\phi,\psi}
\]
\[
\infer[(\wedgeL)]{\Gamma,\phi\wedge\psi\vdash\Delta}
    {\Gamma,\phi,\psi\vdash\Delta} \qquad
\infer[(\wedgeR)]{\Gamma\vdash\Delta,\phi\wedge\psi}
    {\Gamma\vdash\Delta,\phi &\Gamma\vdash\Delta,\psi}
\]
\[
\infer[(\toL)]{\Gamma,\phi\to\psi\vdash\Delta}
    {\Gamma\vdash\phi,\Delta &\Gamma,\psi\vdash\Delta} \qquad
\infer[(\toR)]{\Gamma\vdash\Delta,\phi\vee\psi}
    {\Gamma\phi\vdash\psi\Delta}
\]
\[
\infer[(\forallL)]{\Gamma,\forall x.\phi\vdash\Delta}{\Gamma,\phi[t/x]\vdash\Delta}\quad
\infer[(\forallR)]{\Gamma\vdash\forall x.\phi,\Delta}{\Gamma\vdash\phi[y/x],\Delta}\quad
\infer[(\existsL)]{\Gamma,\exists x.\phi\vdash\Delta}{\Gamma,\phi[y/x]\vdash\Delta}\quad
\infer[(\existsR)]{\Gamma\vdash\exists x.\phi,\Delta}{\Gamma\vdash\phi[t/x],\Delta}
\]
\[
\infer[(\eqL)]{\Gamma[t/x,u/y],t=u\vdash\Delta[t/x,u/y]}
    {\Gamma[u/x,t/y]\vdash\Delta[u/x,t/y]}\qquad
\infer[(\eqR)]{\Gamma\vdash t=t,\Delta}{},
\]
where $t$ in $(\existsL)$ and $(\exists)$ is arbitrary and $y$ in those is fresh.

The unfolding rule for left inductive predicates is 
\[
\infer[(\ul)]{\Gamma,P(\vec{u})\vdash\Delta}{\mbox{all case distinctions for }P(\vec{u})},
\]
where the case distinction of $\Gamma,P(\vec{u})\vdash\Delta$ for a production
\[
\infer[]{P(\vec{t}(\vec{x}))}{Q_1(\vec{u_1}(\vec{x}))&\cdots& Q_n(\vec{u_n}(\vec{x}))&P_{1}(\vec{t_1}(\vec{x}))&\ldots&P_{m}(\vec{t_m}(\vec{x}))}
\]
is 
$\Gamma,\vec{u}=\vec{t}(\vec{y}),Q_1(\vec{u_1}(\vec{y})),\ldots, Q_n(\vec{u_n}(\vec{y})), P_{1}(\vec{t_1}(\vec{y})),\ldots, P_{m}(\vec{t_m}(\vec{y}))\vdash\Delta$, where $\vec{y}$ is fresh.

The unfolding rule for right inductive predicates is 
{\small
\[
\infer[(\ur)]{\Gamma\vdash P(\vec{t}(\vec{u})),\Delta}
    {\Gamma\vdash Q_1(\vec{u_1}(\vec{u})),\Delta
    &\cdots
    &\Gamma\vdash Q_n(\vec{u_n}(\vec{u})),\Delta
    &\Gamma\vdash P_{1}(\vec{t_1}(\vec{u})),\Delta
    &\cdots
    &\Gamma\vdash P_{m}(\vec{t_m}(\vec{u})),\Delta}
\]
}
for a production 
\[
\infer[]{P(\vec{t}(\vec{x}))}{Q_1(\vec{u_1}(\vec{x}))&\cdots& Q_n(\vec{u_n}(\vec{x}))&P_{1}(\vec{t_1}(\vec{x}))&\ldots&P_{m}(\vec{t_m}(\vec{x}))}.
\]
\end{definition}

A pre-proof of $\LKomega$ is a finite or infinite derivation tree, whose all leaves are axioms, defined in a usual way by the inference rules of $\LKomega$. A \emph{pre-proof} is denoted by $\pr$. 
The set of sequent occurrences in $\pr$ is denoted by $\Ent(\pr)$.
The sets of free variables in a formula $\phi$, a sequent $e$, and a (pre-)proof $\pr$ are denoted by $\FV(\phi)$, $\FV(e)$, and $\FV(\pr)$, respectively.
The set of substitutions of $(\subst)$ in $\pr$ is denoted by $\prsubst(\pr)$.
The number of elements in a finite set $S$ is denoted by $\# S$.

To define the proof of $\LKomega$, we define traces following paths.
\begin{definition}[Paths]
Let $\pr$ be a pre-proof in $\LKomega$.
Viewing $\pr$ as a (possibly infinite) graph, spanning from root to leaf.
A path in this graph is called a \emph{path} of $\pr$, denoted by $(e_i)_{j\leq i< k}$ for $j\in \mathbb{N}$ and $k\in\mathbb{N}\cup\{\infty\}$.
\end{definition}

\begin{definition}[Traces] Let $(e_i)_{j\leq i< k}$ be a path in a derivation tree $\pr$.
A \emph{trace} following $(e_i)_{j\leq i< k}$, denoted by $(C_i)_{j\leq i< k}$, is a sequence of inductive predicates in the antecedents of $e_i$
satisfying the following conditions:\\
(i) If $e_i$ is a conclusion of $(\ul)$ and $C_i$ is unfolded by $(\ul)$, then $C_{i+1}$ occurs in the result of the unfolding,\\
(ii) Otherwise, $C_{i+1}$ is the inductive predicate occurrence in $e_{i+1}$ corresponding to $C_i$ in $e_i$.\\
In the case (ii), $i$ is a progressing point of $(C_i)_{j\leq i< k}$. A trace with infinitely many progressing points is called an \emph{infinitely progressing trace}.
\end{definition}

We define proofs of $\LKomega$.
\begin{definition}[Proofs of $\LKomega$]
The condition for the pre-proof that, for any infinite path $(e_i)_{i\geq 0}$, there is an infinitely progressing trace following $(e_i)_{j\leq i}$ for some $j$ is called the \emph{global trace condition}.
A pre-proof $\pr$ is a proof of $\LKomega$ if $\pr$ satisfies the global trace condition.
\end{definition}

\begin{example}
\label{ex: proof NEO inf}
A proof shown in Fig.~\ref{fig: N E O inf} is a $\LKomega$ proof of $\Np(x)\vdash\Ep(x)\vee\Op(x)$. 
This proof is constructed by infinitely unfolding the proof shown in Fig.~\ref{fig: N E O}.
This proof satisfies the global trace condition.
Following the infinite path in this proof, the $\Np$'s in the antecedents construct an infinitely progressing trace.
\end{example}

\begin{figure}
    \centering
    \includegraphics[width=\linewidth]{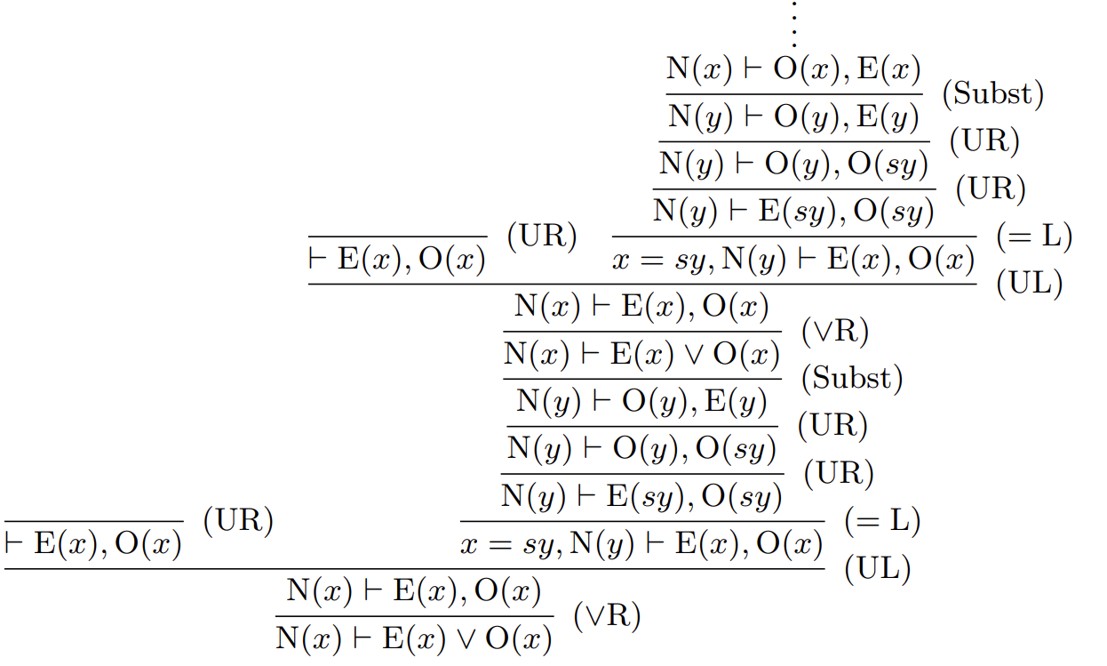}
    \caption{An infinite proof of $\Np(x)\vdash\Ep(x)\vee\Op(x)$}
    \label{fig: N E O inf}
\end{figure}

The soundness of $\LKomega$ can be proved in a similar way to previous studies\cite{Brotherston06}.
\begin{theorem}[Soundness of $\LKomega$] 
If $\pr$ is a proof of $\LKomega$, then every sequent in $\pr$ is valid.
\end{theorem}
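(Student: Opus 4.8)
The plan is to prove soundness by contradiction, following the standard argument for cyclic and infinitary proof systems. Suppose some sequent in a proof $\pr$ of $\LKomega$ is invalid. The key observation is that each inference rule of $\LKomega$ is \emph{locally sound} in a strong sense: if the conclusion of a rule is invalid under some valuation, then at least one premise is invalid under a related valuation. For the logical, structural, equality, and unfolding rules this is routine and mirrors the finitary case. Starting from the assumed invalid sequent at the root, I would repeatedly descend to an invalid premise; since every leaf is an axiom and axioms are valid, this descent cannot terminate, so it produces an infinite path $(e_i)_{i\geq 0}$ along which every $e_i$ is invalid, each witnessed by a valuation (and, crucially, by a counterexample for the inductive predicates in its antecedent).

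The heart of the argument is to extract a decreasing measure from the infinitely progressing trace guaranteed by the global trace condition. Since $\pr$ satisfies the global trace condition, along this infinite path there exists $j$ and an infinitely progressing trace $(C_i)_{j\leq i}$ of inductive predicates in the antecedents. Here I would use the standard model semantics: each inductive predicate is interpreted as a least fixed point, so every true instance $C_i$ carries an \emph{approximant ordinal} $\alpha_i$ measuring the stage at which it enters the fixed point. The local soundness analysis must be refined so that these ordinals behave monotonically: at a non-progressing step the ordinal does not increase ($\alpha_{i+1}\leq\alpha_i$), and at a progressing point $i$ — where $C_i$ is the unfolded predicate at a $(\ul)$ step — the ordinal strictly decreases ($\alpha_{i+1}<\alpha_i$), because unfolding a least-fixed-point predicate exposes premises of strictly smaller approximant. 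Since each $e_i$ on the path is invalid, the relevant instance of $C_i$ holds in the chosen counterexample valuation, so each $\alpha_i$ is a genuine ordinal.

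Combining these two facts, the sequence $(\alpha_i)_{i\geq j}$ is a non-increasing sequence of ordinals that strictly decreases at every progressing point, and the trace is infinitely progressing, so it decreases strictly infinitely often. This yields an infinite strictly descending chain of ordinals, contradicting their well-foundedness. Hence no sequent in $\pr$ can be invalid, which is the claim.

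I expect the main obstacle to be setting up the approximant ordinals and verifying the refined local soundness at the unfolding rules: one must check that a counterexample valuation for an invalid conclusion lifts to a counterexample valuation for some invalid premise while respecting the trace correspondence from Definition (Traces), and that the approximant ordinal strictly drops exactly at progressing points and never rises elsewhere. The treatment of $(\eqL)$ and the fresh-variable handling in $(\forallR)$, $(\existsL)$, and the unfolding rules — where the counterexample valuation must be extended to the fresh variables $\vec{y}$ — also require care, but these are technically routine once the ordinal bookkeeping is fixed. Since the excerpt states that this proceeds as in \cite{Brotherston06}, I would only sketch the trace/ordinal correspondence in detail and defer the rule-by-rule local soundness checks to the established argument.
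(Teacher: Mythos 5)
Your proposal is correct and is essentially the argument the paper has in mind: the paper gives no proof of its own but defers to Brotherston's standard soundness argument, which is exactly the contradiction-via-local-soundness, infinite-invalid-path, approximant-ordinal-descent-along-an-infinitely-progressing-trace scheme you describe. The ingredients you flag as needing care (fresh-variable handling, the trace/ordinal correspondence at $(\ul)$) are the right ones, and no further changes are needed.
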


\subsection{Cyclic-Proof System $\CLK$}
In this subsection, we define the cyclic-proof system $\CLK$ in the standard way.
The inference rules of $\CLK$ are the same as $\LKomega$.
Regarding pre-proofs, since they differ from those in $\LKomega$, we define them as follows.

\begin{definition}[Pre-proofs of $\CLK$]
A pre-proof of $\CLK$ is composed of a finite derivation tree $D$ together with a relation $R$ between sequents in $D$. It is not required that all leaves of $D$ are axioms. Leaves of $D$ that are not axioms are called \emph{buds}, and the domain of $R$ is the set of buds of $D$. The codomain of $R$ is the subset of internal nodes of $D$, and $R$ associates each bud with an internal node labeled by a syntactically identical sequent.
This sequent associated with each bud is called its \emph{companion}.
\end{definition}

A path of $\CLK$ is defined as follows.
\begin{definition}[Paths of $\CLK$] Let $\pr=(D,R)$ be a pre-proof in $\CLK$.
Viewing $D$ as a directed graph, we add edges according to $R$.
A (finite or infinite) path in this graph is called a path of $\pr$.
\end{definition}

Following the above definition of paths, a trace, the global trace condition, and a proof of $\CLK$ are defined analogously to those in $\LKomega$. The soundness of $\CLK$ can be proved by the soundness of $\LKomega$ since whenever there exists a proof in $\CLK$, there also exists a proof containing the same sequent in $\LKomega$.

\begin{example}
\label{ex: proof NEO cycle}
The proof shown in Fig.~\ref{fig: N E O} is a proof of $\CLK$. 
This proof satisfies the global trace condition.
Following the infinite path in this proof, the underlined $\Np$'s on the antecedents construct an infinitely progressing trace.
\end{example}
\section{Admissibility of the Substitution Rule in $\CLK$}

In this section, we show the admissibility of $\CLK$. 
To prove this, we assume a proof $\prplus$ of $\Gamma\vdash\Delta$ in $\CLK$ with $(\subst)$ and construct a proof $\prminus$ of $\Gamma\vdash\Delta$ without $(\subst)$. 

The construction procedure can be divided into the elimination of composite substitutions and atomic substitutions.
Concerning the elimination of composite substitutions, from a proof $\prplus$ containing composite substitutions, we construct a proof $\prvar$ containing only atomic substitutions by a proof transformation.

Concerning the elimination of atomic substitutions, it can be outlined in three steps as shown in Fig.~\ref{fig: outline}.

\begin{figure}
    \centering
    \includegraphics[width=0.8\linewidth]{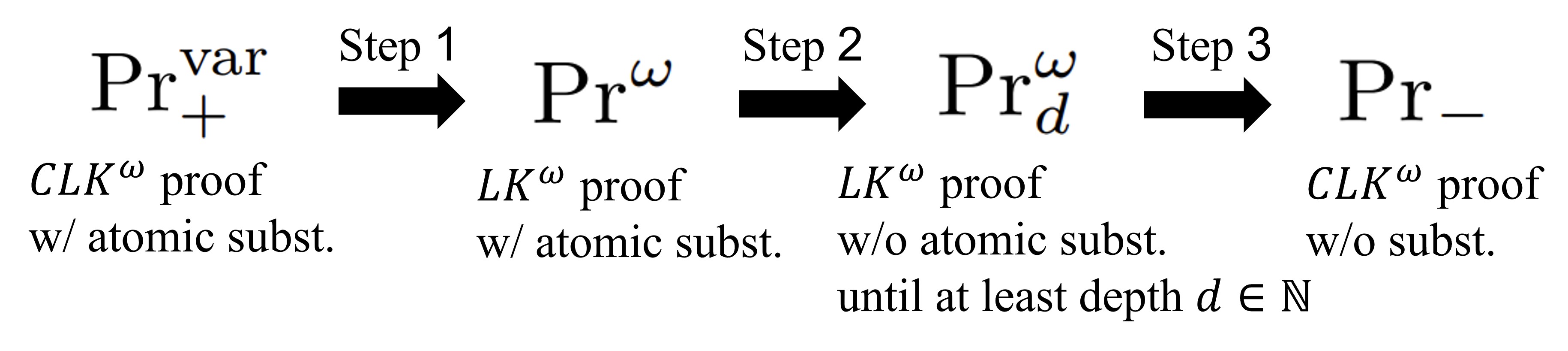}
    \caption{The outline of the elimination of atomic substitutions}
    \label{fig: outline}
\end{figure}

First, we construct a $\LKomega$ proof $\promega$ by unfolding the cycles in $\prvar$.
Secondly, we perform a lifting of the substitution rule to $\promega$. 
Taking $\promega$ as the base case $\prd{0}$, we recursively construct, for each depth $d$, $\LKomega$ proofs $\prd{d}$ that contain no substitution rules until at least depth $d$.
Finally, from $\prd{d}$ for a sufficiently large $d$, we obtain a $\CLK$ pre-proof $\prminus$ without substitution rules.

We transform the proof stepwise in the elimination of atomic substitutions, but the resulting pre-proof $\prminus$ must satisfy the global trace condition.
To achieve this, at each stage, we define a mapping that associates occurrences of sequents in the constructed proof to those in $\prvar$.

In the third step, we need to choose buds and their companions among the sequent occurrences in $\prd{d}$.
At this stage, we choose pairs of sequent occurrences so that each bud and its companion correspond to the same sequent occurrence in $\prvar$. 
As a result, every path in $\prminus$ corresponds to a path in $\prvar$, and
any trace following the path in $\prminus$ is associated with a trace following the corresponding path in $\prvar$. $\prplus$ satisfies the global trace condition, and so does $\prvar$.

\subsection{The Elimination of Composite Substitutions}
In this subsection, we construct $\prvar$ with only atomic substitutions from $\prplus$.
\begin{lemma}[From $\prplus$ to $\prvar$]
\label{lemm: prplus to prvar}
Let $\prplus$ be a $\CLK$ proof of $\Gamma\vdash\Delta$. 
Then, there exist a $\CLK$ proof $\prvar$ of $\Gamma\vdash\Delta$ such that any substitution in $\prsubst(\prvar)$ is atomic.
\end{lemma}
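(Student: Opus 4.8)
The plan is to eliminate composite substitutions one at a time by a purely local surgery on the finite derivation tree underlying the $\CLK$ pre-proof $\prplus$, leaving its cyclic skeleton intact. Since that derivation tree is finite, it contains only finitely many $(\subst)$ nodes, so I would repeatedly pick a node
\[
\infer[(\subst)]{\Gamma[\theta]\vdash\Delta[\theta]}{\Gamma\vdash\Delta}
\]
whose $\theta$ is composite and decompose it so that a single binding $x\mapsto t$, with $t$ containing a positive-arity function symbol, is treated in isolation: the atomic part of $\theta$ is kept as one atomic $(\subst)$, and several composite bindings are handled by iterating with fresh intermediate variables. The whole difficulty is then to reconstruct $\Gamma[t/x]\vdash\Delta[t/x]$ from the subproof of $\Gamma\vdash\Delta$ without ever feeding the compound term $t$ to $(\subst)$.

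The key observation is that a function symbol need not be introduced by $(\subst)$: it may legitimately enter a derivation through $(\eqL)$, through the arbitrary witness of $(\existsR)$, and through $(\eqR)$, none of which is restricted. Concretely, I would replace the chosen node by the following gadget. First rename $x$ to a globally fresh variable $z$ by one \emph{atomic} substitution, obtaining $\Gamma[z/x]\vdash\Delta[z/x]$. Next apply $(\eqL)$, instantiated so as to substitute $z:=t$, which produces $\Gamma[t/x],\,t=z\vdash\Delta[t/x]$ while recording the side equation $t=z$; because $z$ was fresh it occurs nowhere except in this equation. Then $(\existsL)$ abstracts it to $\Gamma[t/x],\,\exists w.(t=w)\vdash\Delta[t/x]$, and a $(\cut)$ on the formula $\exists w.(t=w)$ discharges it: the left cut-premise $\Gamma[t/x]\vdash\exists w.(t=w),\Delta[t/x]$ is proved by $(\existsR)$ with the witness $t$ followed by $(\eqR)$ and $(\wk)$, a finite closed derivation using no $(\subst)$ at all. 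The gadget's conclusion is $\Gamma[t/x]\vdash\Delta[t/x]$ and its only open leaf is $\Gamma\vdash\Delta$, onto which the untouched original subproof is spliced. This is exactly where the cut rule is indispensable, matching the hypothesis that cut is present in $\CLK$.

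It remains to verify that the surgery yields a genuine $\CLK$ proof, and this is the step I expect to be the most delicate. On the structural side, every companion of $\prplus$ is either an untouched internal node of a spliced subproof or one of the two endpoints $\Gamma\vdash\Delta$ and $\Gamma[\theta]\vdash\Delta[\theta]$ of the replaced node, all of which survive verbatim as sequent occurrences; hence the bud--companion relation transports unchanged and no new back-edge is created. For the global trace condition I would exhibit a map sending sequent occurrences of the new pre-proof to those of $\prplus$, collapsing each gadget's spine -- the path from $\Gamma[t/x]\vdash\Delta[t/x]$ up to $\Gamma\vdash\Delta$ -- back to the original edge. The gadget is finite and acyclic, and its side-branch (the left cut-premise) is a finite closed derivation into which no infinite path can escape, so every infinite path of the new pre-proof projects to an infinite path of $\prplus$. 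Since the spine steps are $(\subst)$, $(\eqL)$, $(\existsL)$, and $(\cut)$ and simply carry the antecedent formulas, hence any trace, through without an unfolding, they create no progress points but break no trace; thus an infinitely progressing trace of $\prplus$ lifts to one of the transformed proof. Iterating over the finitely many composite nodes produces $\prvar$ with every substitution in $\prsubst(\prvar)$ atomic, completing the argument.
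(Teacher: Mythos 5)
Your proposal is correct and is essentially the paper's own construction: both replace the composite $(\subst)$ by an atomic renaming to fresh variables followed by a $(\cut)$ on an existentially quantified equation $\exists y.(y=t)$, discharged via $(\existsR)$/$(\eqR)$, with $(\existsL)$ and $(\eqL)$ used to trade the compound terms for the fresh variables, and both justify the global trace condition by noting that every original sequent occurrence survives and the inserted spine preserves traces. The only differences are presentational — you treat one composite binding at a time with intermediate fresh variables, whereas the paper handles all $n$ bindings simultaneously with $n$ stacked cuts.
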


\begin{proof}
For the instance of $(\subst)$
\[
\infer[(\subst)]{\Gamma[x_1:=t_1,\ldots,x_n:=t_n]\vdash\Delta[x_1:=t_1,\ldots,x_n:=t_n]}{\Gamma\vdash\Delta}
\]
for a composite substitution, we can transform it to\\
\begin{adjustbox}{max width=1.2\linewidth,center}
$
\infer[(\cut)]{\Gamma[x_1:=t_1,\ldots,x_n:=t_n]\vdash\Delta[x_1:=t_1,\ldots,x_n:=t_n]}
    {\infer[(\existsR)]{\vdash\exists y_1.y_1=t_1}{\infer[(=R)]{t_1=t_1}{}}
    &\infer*[(\cut)\mbox{ as well as the bottom}]{\exists y_1.y_1=t_1,\Gamma[x_1:=t_1,\ldots,x_n:=t_n]\vdash\Delta[x_1:=t_1,\ldots,x_n:=t_n]}
        {\infer[(\existsL)\ n\ \mbox{times}]{\exists y_1.y_1=t_1,\ldots,\exists y_n=t_n,\Gamma[x_1:=t_1,\ldots,x_n:=t_n]\vdash\Delta[x_1:=t_1,\ldots,x_n:=t_n]}
            {
                \infer[(\eqL)\ n\ \mbox{times}]{y_1=t_1,\ldots, y_n=t_n,\Gamma[x_1:=t_1,\ldots,x_n:=t_n]\vdash\Delta[x_1:=t_1,\ldots,x_n:=t_n]}
                    {
                        \infer[(\wk)]{y_1=t_1,\ldots, y_n=t_n,\Gamma[x_1:=y_1,\ldots,x_n:=y_n]\vdash\Delta[x_1:=y_1,\ldots,x_n:=y_n]}
                            {\infer[(\subst)]{\Gamma[x_1:=y_1,\ldots,x_n:=y_n]\vdash\Delta[x_1:=y_1,\ldots,x_n:=y_n]}{\Gamma\vdash\Delta}}
                    }
            }
        }
    },
$
\end{adjustbox}
where $y_i(i\in[1,n])$ are fresh variables for $\Gamma\vdash\Delta$ and $(\Gamma\vdash\Delta)[x_1:=t_1,\ldots,x_n:=t_n]$.  
Since every sequent occurring in the original proof also appears after the transformation,
and the transformation preserves every trace, the transformed proof satisfies the global trace condition.  \qed
\end{proof}

\begin{remark}
We construct $\prvar$ by a proof transformation that employs several rules.
Therefore, this lemma depends on the inference rules of $\CLK$.
In particular, the above transformation is not possible in cut-free
 $\CLK$, which is strictly weaker than $\CLK$ with
 cuts~\cite{Masuoka23}.
\end{remark}

\subsection{Admissibility of the Atomic Substitution Rule}
In this subsection, we prove the admissibility of the atomic substitution rule in $\CLK$.
Note that in the proof transformation of the previous subsection, constants also disappear; however, in the present proof, we do not make this assumption. 

Hereafter, suppose that $\prvar$ is a $\CLK$ proof such that $\prsubst(\prvar)$ is atomic.
First, we construct a $\LKomega$ proof $\promega$ from $\prvar$. 
At this stage, we define a mapping $\fomega$ from sequent occurrences in $\promega$ to those in $\prvar$.

\begin{lemma}[From $\prvar$ to $\promega$] 
\label{lemm: prplus to promega}
Let $\Gamma\vdash\Delta$ be the conclusion of $\prvar$.
There exists a $\LKomega$ proof $\promega$ of $\Gamma\vdash\Delta$ s.t. $\FV(\promega) = \FV(\prvar)$ and $\prsubst(\promega) = \prsubst(\prvar)$ hold. 
Moreover, there exists a function $\fomega: \Ent(\promega)\to\Ent(\prvar)$ s.t. the following conditions hold:
\begin{itemize}
    \item $\fomega(e_0)= e'_0$, where $e_0$ and $e_0'$ are the conclusions of $\promega$ and $\prvar$, respectively.
    \item For any $e\in\promega$, $\fomega(e)\equiv e$.
    \item If the rule inference
    \[\infer[(R)]{e_i}{e_{i+1}^1&\cdots&e_{i+1}^n}\]
    exists in $\promega$, then the rule instance
    \[\infer[(R)]{\fomega(e_i)}{\fomega(e_{i+1}^1)&\cdots&\fomega(e_{i+1}^n)}\]
    exists in $\prvar$.
    Furthermore, any traces in this rule instance of $\prvar$ are preserved in this rule instance of $\prminus$.
    Here, each bud and its companion are identified in the rule instance of $\CLK$, and $\fomega(e)$ takes the companion.
\end{itemize}
\end{lemma}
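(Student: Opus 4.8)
The plan is to obtain $\promega$ as the standard \emph{unfolding} of the cyclic pre-proof $\prvar=(D,R)$ into an infinite tree, while simultaneously building the bookkeeping map $\fomega$ that records, for each sequent occurrence of $\promega$, the node of $D$ it was copied from. Concretely, I would define $\promega$ and $\fomega$ by a top-down corecursion along $D$: the root of $\promega$ is a copy of the root of $D$ and $\fomega$ sends it to $e'_0$; whenever a node $e$ of $\promega$ with $\fomega(e)=v$ has been produced and $v$ is an internal node of $D$ carrying a rule $(R)$ with children $v_1,\dots,v_n$, I apply the \emph{same} rule $(R)$ at $e$, creating premises $e^1,\dots,e^n$, and I set $\fomega(e^j)=v_j$ when $v_j$ is internal or an axiom leaf, and $\fomega(e^j)=R(v_j)$ (its companion) when $v_j$ is a bud. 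Since a bud and its companion carry syntactically identical sequents, the premise sequents of the copied rule coincide with those prescribed by $(R)$, so each step is a legitimate rule instance; the second and third bullets then hold by construction, and in particular $\fomega(e)\equiv e$ while $\fomega$ never lands on a bud. Because companions are internal nodes, the process never terminates at a bud, so every leaf of $\promega$ is an axiom leaf of $D$ and $\promega$ is a genuine (possibly infinite) pre-proof.

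The two equalities $\FV(\promega)=\FV(\prvar)$ and $\prsubst(\promega)=\prsubst(\prvar)$ follow immediately, since every node of $\promega$ is a verbatim copy of a node of $D$: no rule is added or deleted, and no variable is renamed, because the syntactic identity of bud and companion lets us graft the companion's subtree below the bud position with no $\alpha$-conversion. The only point I would double-check here is that the fresh variables introduced by $(\forallR)$, $(\existsL)$, and $(\ul)$ in $D$ may be reused verbatim in every copy, which is harmless precisely because the grafted sequents match exactly.

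The substantive part is verifying that $\promega$ satisfies the global trace condition, which is what upgrades it from a pre-proof to a $\LKomega$ \emph{proof}. For this I would exploit the correspondence encoded by $\fomega$. Given any infinite path $(e_i)_{i\ge 0}$ of $\promega$, the third bullet guarantees that $(\fomega(e_i))_{i\ge 0}$ is an infinite path in the graph of $\prvar$, since consecutive images are linked either by a tree edge of $D$ or by a bud--companion edge of $R$. As $\prvar$ is a $\CLK$ proof, some tail of this image path carries an infinitely progressing trace; by the trace-preservation clause, applied step by step and composed along the path, this trace lifts to a trace along the corresponding tail of $(e_i)_{i\ge 0}$ with its progressing points preserved, hence to an infinitely progressing trace in $\promega$. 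This establishes the global trace condition and completes the argument.

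I expect the main obstacle to be exactly this trace-preservation step: I must check that the local lifting of a trace through each copied rule instance — and in particular across the junctures where a bud-premise has been replaced by a copy of its companion — composes into a single global lift, and that the induced map from infinite paths of $\promega$ to infinite paths of the cyclic graph of $\prvar$ genuinely sends tails to tails so that no progressing point is lost. The unfolding itself and the first two bullets are routine; the care lies entirely in threading $\fomega$ so that the global trace condition transfers in the correct direction.
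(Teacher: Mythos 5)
Your proposal is correct and follows essentially the same route as the paper, which constructs $\promega$ as Brotherston's tree unfolding of the cyclic pre-proof and lets $\fomega$ record the originating node of $\prvar$ (taking the companion at bud positions). Your additional care about reusing fresh variables verbatim in each copied rule instance and about lifting the infinitely progressing trace along the image path is exactly the detail the paper leaves implicit by citing the standard unfolding construction.
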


$\promega$ can be constructed in a similar manner to the tree unfolding
as defined by Brotherston~\cite{Brotherston06}.  In the tree unfolding,
cycles are unfolded based on the relation between buds and companions.
Fig.~\ref{fig: N E O inf} is an infinite proof obtained by unfolding
Fig.~\ref{fig: N E O} in this way.  Intuitively, $\fomega$ associates
each sequent in the infinite proof obtained by unfolding the cyclic
proof with the corresponding sequent in the original cyclic proof.

The preservation of traces means that, for any trace following the path from $\fomega(e_i)$ to $\fomega(e^m_{i+1})$, there exists a corresponding trace following the path from $e_i$ to $e_{i+1}$ by the corresponding inductive predicates.


Secondly, we mention the recursive construction of $\LKomega$ proofs $\prd{d}$ of $\Gamma\vdash\Delta$ that contain no substitution rules until at least depth $d$. 
At this stage, we define $\fd{d}:\Ent(\prd{d})\to\Ent(\prvar)\cup\{\epsilon\}$. $\epsilon$ is a special symbol indicating no correspondence.
In correspondence with the lifting of substitution rules, we introduce \emph{partial-substitution closure} and \emph{substitution-application property} for atomic substitutions.

\begin{definition}[Partial-substitution closure]
Let $\Theta$ and $X$ be a finite set of substitutions and a finite set of variables, respectively.
We inductively define the \emph{partial-substitution closure} $\psclosure(\Theta, X)$ as the smallest set which satisfies the following:
\begin{itemize}
    \item $\Theta\subseteq\psclosure(\Theta, X)$,
    \item $\forall\theta\in\psclosure(\Theta, X),x,y\in X.\theta[x\to y]\in\psclosure(\Theta,X)$,
    \item $\forall\theta_1,\theta_2\in\psclosure(\Theta, X).\theta_1\theta_2\in\psclosure(\Theta,X)$.
\end{itemize}
\end{definition}

The partial-substitution closure $\psclosure(\Theta, X)$ is the smallest set containing $\Theta$ closed under two operations: 
overwriting a substitution with variables in $X$, and the composition of substitutions. 
When substitutions are restricted to atomic ones, the domain and image of the partial-substitution closure are fixed as finite sets, allowing the following lemma to be established.

\begin{lemma}
\label{lemm: psc atomic and finite}
Let $\Theta_{var}$ be a finite set of atomic substitutions.
Then, any substitution in partial-substitution closure $\psclosure(\Theta_{var}, X)$ is an atomic substitution for any variable set $X$.
Furthermore, $\psclosure(\Theta_{var}, X)$ is finite.
\end{lemma}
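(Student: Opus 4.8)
The plan is to prove both assertions simultaneously by exhibiting an explicit finite set $S$ of atomic substitutions that contains $\Theta_{var}$ and is closed under the two generating operations of the partial-substitution closure. Since $\psclosure(\Theta_{var}, X)$ is by definition the \emph{smallest} set satisfying those closure conditions, establishing that $S$ satisfies all three conditions yields $\psclosure(\Theta_{var}, X) \subseteq S$, from which both atomicity of every element and finiteness of the closure follow at once. Throughout I take $X$ to be finite, as required by the definition of $\psclosure$.

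First I would fix the relevant finite bounds. Let $D$ be the union of $X$ with the domains of all substitutions in $\Theta_{var}$; let $C$ be the set of constants occurring in the images of substitutions in $\Theta_{var}$; and let $V$ be the union of $X$ with the set of variables occurring in those images. All three are finite because $\Theta_{var}$ is finite, each of its substitutions has finite domain, and $X$ is finite. I then define $S$ to be the set of atomic substitutions $\theta$ with $\dom(\theta) \subseteq D$ and $\img(\theta) \subseteq V \cup C$. An element of $S$ is determined by its restriction to $D$ (it acts as the identity outside $D$), and for each variable in $D$ there are at most $\#(V \cup C) + 1$ possible values (a member of $V \cup C$, or the variable itself), so $S$ is finite.

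Next I would verify the three closure conditions. Membership $\Theta_{var} \subseteq S$ is immediate from the definitions of $D$, $V$, and $C$. For overwriting, if $\theta \in S$ and $x,y \in X$, then $\theta[x \to y]$ has domain contained in $\dom(\theta) \cup \{x\} \subseteq D$ (using $X \subseteq D$) and image contained in $\img(\theta) \cup \{y\} \subseteq V \cup C$ (using $y \in X \subseteq V$), and it is again atomic, so $\theta[x \to y] \in S$. The delicate case is composition. The domain bound $\dom(\theta_1\theta_2) \subseteq \dom(\theta_1) \cup \dom(\theta_2) \subseteq D$ is routine. For the image and atomicity I would argue pointwise on each variable $z$ in the domain of $\theta_1\theta_2$, where $(\theta_1\theta_2)(z) = \theta_2(\theta_1(z))$: since $\theta_1$ is atomic, $\theta_1(z)$ is either a constant in $C$, which $\theta_2$ leaves fixed, or a variable, which $\theta_2$ (being atomic) maps to a variable or constant in $V \cup C$; chasing these finitely many cases shows $(\theta_1\theta_2)(z)$ is always a variable or constant lying in $V \cup C$. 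Hence $\theta_1\theta_2$ is atomic with image in $V \cup C$, so $\theta_1\theta_2 \in S$.

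I expect the composition case to be the only genuine obstacle. It relies crucially on the fact that atomic substitutions produce only variables and constants and that constants are invariant under substitution, so that composing two atomic substitutions can neither manufacture a term containing a positive-arity function symbol nor enlarge the image beyond the fixed finite set $V \cup C$. Once $S$ has been shown to be finite, to contain $\Theta_{var}$, and to be closed under both operations, minimality of the closure gives $\psclosure(\Theta_{var}, X) \subseteq S$, so every substitution in the closure is atomic and the closure is finite, which is exactly the claim. \qed
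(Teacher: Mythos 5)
Your proof is correct and follows essentially the same route as the paper's: bound the domains of all substitutions in the closure by $\dom(\Theta_{var})\cup X$ and the images by $\img(\Theta_{var})\cup X$, check these bounds (and atomicity) are preserved by overwriting and composition, and count. You carry out the argument more carefully than the paper — packaging the invariants as an explicit closed superset $S$ and invoking minimality of the closure, and spelling out why composing atomic substitutions stays atomic — but the underlying idea and the resulting finite bound are the same.
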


\begin{remark}
When $\Theta_{var}$ contains a composite substitution, $\psclosure(\Theta_{var}, X)$ can be infinite.
For example, suppose there is a unary function symbol $f$ and some $\theta\in \Theta_{var}$ such that $f(x)$ occurs in the image of $\theta$. Then the terms $f(x), f(f(x)),\ldots$ are all appear in images of substitutions in $\psclosure(\Theta_{var}, X)$, which force $\psclosure(\Theta_{var}, X)$ to infinite.
\end{remark}

\begin{definition}[Substitution-application property]
 Consider a rule $(R)$ except $(\subst)$. If for any instance
\[
\infer[(R)]{\Gamma\vdash\Delta}{\Gamma_1\vdash\Delta_1 & \ldots & \Gamma_n\vdash \Delta_n}
\]
of $(R)$ and any atomic substitution $\theta$, there is an instance of $(R)$:
\[
\infer[(R)]{\Gamma[\theta] \vdash\Delta[\theta]}{\Gamma_1[\theta_1]\vdash\Delta_1[\theta_1] & \ldots & \Gamma_n[\theta_n]\vdash\Delta_n[\theta_n]},
\]
where $\theta_i(1\leq i\leq n)$ are substitutions in $\psclosure(\{\theta\}, \FV(\Gamma_i\vdash \Delta_i)\cup\FV(\Gamma\vdash\Delta))$,
s.t. any traces in the above instance are preserved, then we say that $(R)$ satisfies the \emph{substitution-application property}.
We refer to the latter instance as a \emph{substitution application} of the former.
If all inference rules of a proof system $\system$ satisfy the substitution-application property, 
then we say that $\system$ satisfies the substitution-application property.
\end{definition}

Note that the substitution-application property subsumes closure under atomic substitution for axiom rules.
Below, we discuss the fact that $\CLK$ and $\LKomega$ satisfy the substitution-application property.
\begin{lemma}
\label{lemm: subst-app property}
The proof systems $\CLK$ and $\LKomega$ satisfy the substitution-application property.
\end{lemma}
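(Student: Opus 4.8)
The plan is to prove the substitution-application property by a single case analysis over the inference rules, which are common to $\CLK$ and $\LKomega$, exhibiting for each rule $(R)$ and each atomic substitution $\theta$ a suitable substitution application. It is convenient to split the rules into two groups according to whether their premises introduce fresh variables.

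\emph{The easy group: rules with no fresh variables.} For $(\axiom)$, $(\eqR)$, $(\wk)$, $(\cut)$, $(\notL)$, $(\notR)$, $(\veeL)$, $(\veeR)$, $(\wedgeL)$, $(\wedgeR)$, $(\toL)$, $(\toR)$, $(\forallL)$, $(\existsR)$, $(\ur)$, and $(\eqL)$ I would simply take $\theta_i=\theta$ for every premise; since $\theta\in\psclosure(\{\theta\},X)$ trivially, the side condition on the $\theta_i$ holds. For the zero-premise rules $(\axiom)$ and $(\eqR)$ it suffices that $\theta$ preserves the side condition, i.e. $\Gamma\cap\Delta\neq\emptyset$ implies $\Gamma[\theta]\cap\Delta[\theta]\neq\emptyset$, and $t=t$ maps to $\theta(t)=\theta(t)$. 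For $(\forallL)$ and $(\existsR)$ the witness term $t$ simply becomes $t[\theta]$. For $(\eqL)$ I would first note that the marker variables $x,y$ of the rule do not occur free in the instantiated sequents and may be chosen fresh for $\theta$, so applying $\theta$ commutes with the built-in substitutions $[t/x,u/y]$ and $[u/x,t/y]$, and $\theta_i=\theta$ again works. Trace preservation is immediate throughout this group, because $\theta$ maps each inductive-predicate occurrence to an occurrence of the same predicate and leaves the occurrence correspondence between conclusion and premises untouched; for $(\ur)$ one additionally observes $P(\vec{t}(\vec{u}))[\theta]=P(\vec{t}(\vec{u}[\theta]))$, so the same production unfolds it and the unfolding structure commutes with $\theta$.

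\emph{The crux: the fresh-variable rules $(\forallR)$, $(\existsL)$, and $(\ul)$.} Here the premises introduce fresh variables $\vec{y}$, and $\theta$ may clash with them (some $y_i$ may lie in $\dom(\theta)\cup\img(\theta)$), so after substitution $\vec{y}$ need not be fresh for the new conclusion. The remedy is to rename $\vec{y}$ to fresh witnesses $\vec{y}'$ and set $\theta_i=\theta[\vec{y}\to\vec{y}']$, which is obtained from $\theta$ by iterated overwriting and hence lies in $\psclosure(\{\theta\},X)$ provided every $y_i$ and every $y_i'$ belongs to $X=\FV(\Gamma_i\vdash\Delta_i)\cup\FV(\Gamma\vdash\Delta)$. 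The key point is that this renaming can be realized \emph{inside} the prescribed local set $X$, and I would settle it by a counting argument. Writing $V=\FV(\Gamma\vdash\Delta)$ for the conclusion and $k=\#\vec{y}$, freshness gives $X=V\cup\{\vec{y}\}$ with $\vec{y}$ disjoint from $V$, so $\#X=\#V+k$. Since $\theta$ is atomic it sends variables to variables or constants, hence the set $W$ of free variables of the substituted conclusion satisfies $\#(X\cap W)\le\#W\le\#V$, giving $\#(X\setminus W)\ge k$. Thus $X$ contains at least $k$ distinct variables fresh for the new conclusion, which I choose as $\vec{y}'$.

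With this choice $\theta[\vec{y}\to\vec{y}']$ agrees with $\theta$ on $V$ (as $\vec{y}\notin V$), so it reproduces the $\theta$-image of the non-fresh part of the premise while sending $\vec{y}$ to $\vec{y}'$; this is exactly a premise of the corresponding instance of $(R)$ whose fresh-variable side condition is met by the choice of $\vec{y}'$. Trace preservation at $(\ul)$, the only rule among these carrying progressing points, follows because $P(\vec{u})[\theta]$ is unfolded by the same production into the $\theta$-images (with $\vec{y}'$ substituted for $\vec{y}$) of the predicates of the original unfolding, so every trace is carried through together with its progressing points. I expect the bookkeeping for $(\ul)$, with its several case distinctions and the simultaneous renaming of $\vec{y}$, to be the most delicate part, while the counting argument is precisely what lets the whole scheme stay within the local closure $\psclosure(\{\theta\},X)$.
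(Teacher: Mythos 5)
Your proposal is correct and follows essentially the same route as the paper: take $\theta_i=\theta$ for rules without fresh variables, and for the fresh-variable rules (the paper details only $(\ul)$) rename the fresh variables via iterated overwriting $\theta[\vec{y}\to\vec{y}']$, with the replacement variables found inside $X$ by exactly the same counting argument from atomicity ($\theta$ cannot increase the number of free variables, so $X$ retains enough variables fresh for $\Gamma[\theta]\vdash\Delta[\theta]$). Your write-up is, if anything, slightly more explicit about the easy group and about choosing $\vec{y}'$ outside the \emph{whole} free-variable set of the substituted conclusion rather than just outside the image of $\theta$.
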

\begin{proof}
If the rule $(R)$ does not introduce any fresh variables, it suffices to take $\theta_i=\theta$ for $0\leq i\leq n$. 
Hereafter, we only consider the case of $(\ul)$.
The cases where fresh variables appear in other rules are handled in the same manner.

Consider an instance
\[
\infer[(\ul)]{\Gamma\vdash\Delta}{\Gamma_1\vdash\Delta & \ldots &\Gamma_n\vdash\Delta_n}
\]
of $(\ul)$ and an atomic substitution $\theta$.

Let $f|_D$ denote the partial function of $f$ restricted to the domain $D$, let $Var(T)$ denote the set of variables occurring in the terms of $T$, and let $X_i$ be $\FV(\Gamma\vdash\Delta)\cup\FV(\Gamma_i\vdash\Delta)$ for each $1\leq i\leq n$. 

For each $i$, we can define $\theta_i\in\psclosure(\{\theta\},X_i)$ satisfying
\begin{itemize}
    \item $\theta_i|_{\FV(\Gamma\vdash\Delta)}$ equals to $\theta|_{\FV(\Gamma\vdash\Delta)}$,
    \item any variable in the image of $\theta_i|_{X_i-\FV(\Gamma\vdash\Delta)}$ is fresh for $\Gamma[\theta]\vdash\Delta[\theta]$, and
    \item $\theta_i|_{X_i-\FV(\Gamma\vdash\Delta)}$ is injective.
\end{itemize}
Note that $X_i-\FV(\Gamma\vdash\Delta)$ is the set of variables in
 $\Gamma_i\vdash\Delta$ that are fresh for $\Gamma\vdash\Delta$.
 Let $\{x_1,\ldots,x_a\}=X_i-\FV(\Gamma\vdash\Delta)$.  Since
 $\theta$ is atomic, we have $\#\FV(\Gamma\vdash\Delta)\geq
 \#Var(\img(\theta_i|_{\FV(\Gamma\vdash\Delta)}))$, and we can
 choose pairwise distinct $x'_{j}\in
 X_i-Var(\img(\theta_i|_{\FV(\Gamma\vdash\Delta)}))$ for each $1\leq
 j\leq a$. Hence, we can define
\[
 \theta_i = \theta[x_{1}\to x'_{1},\ldots, x_{a}\to x'_{a}]
 \in\psclosure(\{\theta\},X_i)
\]
 such that the above conditions hold, and then
\[
\infer[(\ul)]{\Gamma[\theta]\vdash\Delta[\theta]}{\Gamma_1[\theta_1]\vdash\Delta[\theta_1] & \ldots & \Gamma_n[\theta_n]\vdash\Delta[\theta_n]},
\]
is an instance of $(\ul)$.
\qed
\end{proof}

\begin{remark}
In general, the substitution-application property does not hold for composite substitutions.
As an example, consider a rule instance 
\[
\infer[(\ul)]{\Op(x)\vdash \Np(x)}{x=sy,\Ep(y)\vdash \Np(x)}
\]
and a substitution $\theta=[f(x,y)/x]$.
The free variables contained in this rule instance are exactly $x$ and $y$.
On the other hand, when applying $(\ul)$ to $\Op(f(x,y))\vdash \Np(f(x,y))$, a fresh variable different from $x$ and $y$ is required in the premise.
Such a variable does not appear in the image of substitution of $\psclosure(\{[f(x,y)/x]\},\{x,y\})$.
\end{remark}

By the substitution application, $\prd{d}$ can be constructed from $\promega$.
\begin{lemma}[From $\promega$ to $\prd{d}$] 
\label{lemm: promega to prd}
Let $\Gamma\vdash\Delta$ be the conclusion of $\prvar$.
There exists a $\LKomega$ proof $\prd{d}$ of $\Gamma\vdash\Delta$ that contains no substitution rules until at least depth $d$ s.t. $\FV(\prd{d})\subseteq \FV(\prvar)$ and $\Theta(\prd{d})\subseteq \psclosure(\Theta(\prvar), \FV(\prvar))$ hold. 
Moreover, there exists a function $\fd{d} : \Ent(\prd{d})\to\Ent(\prvar)\cup \{\epsilon\}$ s.t. the following conditions hold:
\begin{itemize}
    \item $\fd{d}(e_0)= e'_0$, where $e_0$ and $e_0'$ are the conclusions of $\prd{d}$ and $\prvar$, respectively.
    \item $\fd{d}(e)=\epsilon$ if $e$ is a premise of the $(\subst)$.
    \item If $\fd{d}(e)\neq\epsilon$, then $e\equiv \fd{d}(e)\theta$ for some $\theta\in\psclosure(\prsubst(\prvar),\FV(\prvar))$.
    \item If the following applications of $(\subst)$ and some inference rule $(R)$ except $(\subst)$ exist in $\prd{d}$:
    \[
    \infer*[(\subst)\mbox{ 0 or more times}]{\infer[]{e_i}{\qquad\qquad}}{\infer[(R)]{}{e_{i+1}^1&\cdots&e_{i+1}^n}},
    \]
    then the following applications of  $(\subst)$ and $(R)$ exist in $\prvar$:
    \[
    \infer*[(\subst)\mbox{ 0 or more times}]{\infer[]{\fd{d}(e_i)}{\qquad\qquad}}{\infer[(R)]{}{\fd{d}(e_{i+1}^1)&\cdots&\fd{d}(e_{i+1}^n)}}.
    \]
    Furthermore, any traces in these rule applications of $\prvar$ are preserved in $\prd{d}$.
    Here, each bud and its companion are identified in the rule applications of $\CLK$, and $\fd{d}(e)$ takes the companion.
\end{itemize}
\end{lemma}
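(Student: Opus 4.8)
The plan is to induct on the depth $d$, starting from $\prd{0} := \promega$ and, at each step, lifting the shallowest substitutions one level higher so that the substitution-free prefix grows by one. For the base case I would reuse $\fomega$ from Lemma~\ref{lemm: prplus to promega}, but set $\fd{0}(e) = \epsilon$ whenever $e$ is a premise of a $(\subst)$ and $\fd{0}(e) = \fomega(e)$ otherwise. Since $\prsubst(\promega) = \prsubst(\prvar)$ and $\fomega(e) \equiv e$, every ``block-then-rule'' pattern of $\promega$ maps to one of $\prvar$ by chaining the single-rule correspondence of Lemma~\ref{lemm: prplus to promega}, and all listed conditions follow; in particular condition~3 holds with the identity substitution.

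For the inductive step, suppose $\prd{d}$ and $\fd{d}$ are built, so that no $(\subst)$ occurs at depth $<d$. At every node at depth $d$ that is the bottom $e_i$ of a maximal run of consecutive substitutions capped by a non-substitution rule $(R)$ with conclusion $\Gamma\vdash\Delta$ and premises $\Gamma_j\vdash\Delta_j$, I would compose that run into a single substitution $\theta$; by Lemma~\ref{lemm: psc atomic and finite} and the hypothesis $\prsubst(\prd{d})\subseteq\psclosure(\prsubst(\prvar),\FV(\prvar))$, $\theta$ is atomic and lies in $\psclosure(\prsubst(\prvar),\FV(\prvar))$. Applying the substitution-application property (Lemma~\ref{lemm: subst-app property}) to $(R)$ and $\theta$, I replace the run followed by $(R)$ by a single instance of $(R)$ with conclusion $\Gamma[\theta]\vdash\Delta[\theta]=e_i$ and premises $\Gamma_j[\theta_j]\vdash\Delta_j[\theta_j]$, placing one $(\subst)$ for $\theta_j$ above each and keeping the original subtree over $\Gamma_j\vdash\Delta_j$ intact. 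Because the substitution-application property draws the $\theta_j$ from $\psclosure(\{\theta\},X_j)$ with $X_j\subseteq\FV(\prd{d})\subseteq\FV(\prvar)$ and introduces no variable outside $\FV(\prvar)$, this keeps $\FV(\prd{d+1})\subseteq\FV(\prvar)$ and, by composition-closure, $\prsubst(\prd{d+1})\subseteq\psclosure(\prsubst(\prvar),\FV(\prvar))$; and since $(R)$ now sits at depth $d$ while every remaining substitution sits at depth $\ge d+1$, the substitution-free prefix has grown as required.

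I would then let $\fd{d+1}$ agree with $\fd{d}$ on every retained sequent (those at depth $<d$, the conclusion $e_i$, and the preserved subtrees), send each new $(\subst)$-premise $\Gamma_j\vdash\Delta_j$ to $\epsilon$, and send each new conclusion $\Gamma_j[\theta_j]\vdash\Delta_j[\theta_j]$ to $\fd{d}(\Gamma_j\vdash\Delta_j)$. Conditions 1 and 2 are then immediate; condition 3 follows since $\Gamma_j[\theta_j]\vdash\Delta_j[\theta_j]\equiv\fd{d}(\Gamma_j\vdash\Delta_j)\,\sigma\theta_j$, where $\Gamma_j\vdash\Delta_j\equiv\fd{d}(\Gamma_j\vdash\Delta_j)\,\sigma$ by the hypothesis and $\sigma\theta_j$ again lies in $\psclosure(\prsubst(\prvar),\FV(\prvar))$. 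The only block-then-rule patterns that change are the one capped by the lifted $(R)$ (conclusion $e_i$, new premises $\Gamma_j[\theta_j]\vdash\Delta_j[\theta_j]$) and those capped by the rule above $\Gamma_j\vdash\Delta_j$ (now carrying the extra $(\subst)$ for $\theta_j$ at their foot); in each case $\fd{d+1}$ points at exactly the same $\prvar$ pattern that the hypothesis already supplied for the corresponding pattern of $\prd{d}$, the extra $\theta_j$ needing no counterpart because condition 4 constrains neither the number nor the identity of the substitutions. Trace preservation carries over from the hypothesis together with the trace-preserving clause of the substitution-application property, and since every infinite path of $\prd{d+1}$ projects under $\fd{d+1}$ (contracting $(\subst)$-steps and $\epsilon$) to an infinite path of $\prvar$ along which traces pull back, $\prd{d+1}$ inherits the global trace condition and is a genuine $\LKomega$ proof.

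\textbf{Main obstacle.} The delicate point is not the rearrangement itself but showing that condition 4 survives it, since it is this condition that will later let the fold-back into a finite cyclic proof preserve traces. The key observation is that lifting only reshuffles the $\prd{}$-side of the correspondence --- turning ``block then $(R)$'' into ``$(R)$ then block'' --- while the $\prvar$-side pattern picked out by $\fd{}$ stays put. The remaining quantitative control (atomicity of each lifted $\theta_j$, finiteness of $\psclosure(\prsubst(\prvar),\FV(\prvar))$, and $\FV(\prd{d})\subseteq\FV(\prvar)$) is exactly what the substitution-application property delivers by reusing existing variables rather than inventing fresh ones, which is precisely why composite substitutions had to be eliminated first.
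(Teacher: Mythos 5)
Your proposal is correct and follows essentially the same route as the paper: induction on $d$ with $\prd{0}=\promega$ and $\fd{0}$ obtained from $\fomega$ by sending $(\subst)$-premises to $\epsilon$, then in the inductive step composing each maximal run of consecutive substitutions at the shallowest affected depth into a single atomic substitution and lifting it past the capping rule via the substitution-application property, redefining the mapping on the new $(\subst)$-conclusions and $(\subst)$-premises exactly as the paper does. Your explicit verification of condition~3 via composition-closure and of the closure/free-variable invariants is somewhat more detailed than the paper's, but the underlying argument is the same.
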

\begin{proof}
The proof proceeds by induction on $d$.

{\sffamily Base Case:} $\prd{0}$ is $\promega$ in Lemma~\ref{lemm: prplus to promega}. 
$\fd{0}$ is $\fomega$ in Lemma~\ref{lemm: prplus to promega}, except for the premises of $(\subst)$ in $\prd{0}$. For the premises of $(\subst)$, we assign $\fd{0}(e)=\epsilon$.

{\sffamily Induction Step:} Assume $\prd{k}$ and $\fd{k}$. 
By performing the following operation on every $(\subst)$ at depth $k+1$ in $\prd{k}$, we can construct $\prd{k+1}$.
At the same time, we define $\fd{k+1}$. 
In parts of the proof unaffected by this transformation, $\fd{k+1}$ keeps the values of $\fd{k}$.

If the premise of the $(\subst)$ is derived from zero or more applications of $(\subst)$ and one application of some rule $(R)$, that part is of the following form.
\[
\infer[]{\Gamma[\theta_1\cdots\theta_n]\vdash \Delta[\theta_1\cdots\theta_n]}
    {\infer*[(\subst)\mbox{ }n-1\mbox{ times}]{}{
        \infer[(\subst)]{\Gamma[\theta_1]\vdash \Delta[\theta_1]}
            {\infer[(R)]{\Gamma\vdash \Delta}
                {\infer*{\Gamma_1\vdash \Delta_1}{} 
                &\ldots 
                &\infer*{\Gamma_m\vdash \Delta_m}{}
                }
            }
        }
    }.
\]
At this point, since $\theta_1\cdots\theta_n$ is an atomic substitution, we apply the substitution application of $(R)$ to $\Gamma[\theta_1\cdots\theta_n]\vdash \Delta[\theta_1\cdots\theta_n]$ in $\prd{k+1}$ as follows:
\[
\infer[(R)]{
\Gamma[\theta_1\cdots\theta_n]\vdash \Delta[\theta_1\cdots\theta_n]
}{
\infer[(\subst)]{\Gamma_1[\theta'_1]\vdash \Delta_1[\theta'_1]}
    {\infer*{\Gamma_1\vdash \Delta_1}{}}
&
\ldots
&
\infer[(\subst)]{\Gamma_m[\theta'_m]\vdash\Delta_m[\theta'_m]}
    {\infer*{\Gamma_m\vdash\Delta_m}{}}
},
\]
where every $\theta'_i$ is in $\psclosure$$(\{\theta_1\cdots\theta_n\}, \FV(\Gamma_i\vdash \Delta_i)\cup \FV(\Gamma\vdash \Delta))\subseteq\psclosure(\prsubst(\prvar),\FV(\prvar))$ by Lemma~\ref{lemm: subst-app property}.
$\fd{k+1}(\Gamma_i\vdash \Delta_i)$ is defined as $\epsilon$. 
$\fd{k+1}(\Gamma_i[\theta'_i]\vdash \Delta_i[\theta'_i])$ is defined as $\fd{k}(\Gamma_i\vdash \Delta_i)$. 
$\fd{k+1}(\Gamma[\theta_1\cdots\theta_n]\vdash \Delta[\theta_1\cdots\theta_n])$ is defined as $\fd{k}(\Gamma[\theta_1\cdots\theta_n]\vdash \Delta[\theta_1\cdots\theta_n])$.
As a result, the $(\subst)$ applications at depth $k+1$ are pushed up to $k+2$.

From the above, we can construct $\prd{k+1}$ with $\prd{k}$. 
Since the above operation affects only finite portions of each path, $\prd{d}$ satisfies the global trace condition.
Moreover, since the proof transformation is performed with substitution applications, the conditions $\FV(\prd{d})\subseteq \FV(\prplus)$ and $\Theta(\prd{d})\subseteq \psclosure(\Theta(\prplus), \FV(\prplus))$ are preserved in every induction step.

In defining $\fd{k+1}$, we associate sequent occurrences in $\promega$ with corresponding sequent occurrences in $\prvar$ by extending $\fd{k}$.
Note that these substitution applications preserve the conditions specified in this lemma, including the preservation of traces.\qed
\end{proof}

In the third place, we demonstrate that we can construct the pre-proof $\prminus$ in $\CLK$ from $\prd{d}$ for sufficiently large $d$. 
Intuitively, we show that for each infinite path in $\prd{d}$, there are sequents $e_i$ and $e_j$ s.t. $i< j\leq d$, $\fd{d}(e_i)=\fd{d}(e_j)$, and $e_i\equiv e_j$ hold. By assigning such $e_j$ and $e_i$ as a bud and its companion, respectively, we can construct $\prminus$. 

\begin{lemma}[From $\prd{d}$ to $\prminus$]
\label{lemm: prd to prminus}
Let $\Gamma\vdash\Delta$ be the conclusion of $\prvar$.
There exists a $\CLK$ pre-proof $\prminus$ of $\Gamma\vdash\Delta$ that contains no substitution rules, 
and a function $\f : \Ent(\prminus)\to\Ent(\prvar)$ s.t. the following conditions hold:
\begin{itemize}
    \item $\f(e_0)= e'_0$, where $e_0$ and $e_0'$ are the conclusions of $\prminus$ and $\prvar$, respectively.
    \item For any bud $e_i$ and its companion $e_{i+1}$, $\f(e_i)= \f(e_{i+1})$.
    \item If the rule instance
    \[\infer[(R)]{e_i}{e_{i+1}^1&\cdots&e_{i+1}^n}\] 
    exists in $\prminus$, then the following applications of $(\subst)$ and $(R)$ exist in $\prvar$:
    \[
    \infer*[(\subst)\mbox{ 0 or more times}]{\infer[]{\fd{d}(e_i)}{\qquad\qquad}}{\infer[(R)]{}{\fd{d}(e_{i+1}^1)&\cdots&\fd{d}(e_{i+1}^n)}}.
    \]
    Furthermore, any traces in these rule applications of $\prvar$ are preserved in this rule application of $\prminus$.
    Here, each bud and its companion are identified in the rule applications, and $\f(e)$ takes the companion.
\end{itemize}    
\end{lemma}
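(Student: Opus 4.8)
The plan is to build $\prminus$ by truncating the proof $\prd{d}$ at a fixed finite depth $d$, converting repeated sequent occurrences into bud--companion pairs, and then taking $\f$ to be $\fd{d}$ restricted to the surviving occurrences. The whole construction rests on a pigeonhole observation. By the third condition of Lemma~\ref{lemm: promega to prd}, every occurrence $e$ of $\prd{d}$ with $\fd{d}(e)\neq\epsilon$ satisfies $e\equiv\fd{d}(e)\theta$ for some $\theta\in\psclosure(\prsubst(\prvar),\FV(\prvar))$. Associate to such an $e$ the \emph{signature} $(\fd{d}(e),[e])$, where $[e]$ denotes the $\alpha$-equivalence class of $e$. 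Since $\prvar$ is a finite cyclic pre-proof, $\fd{d}(e)$ ranges over the finite set $\Ent(\prvar)$; and for each value of $\fd{d}(e)$ the class $[e]$ is one of the at most $\#\psclosure(\prsubst(\prvar),\FV(\prvar))$ classes $[\fd{d}(e)\theta]$, a finite number by Lemma~\ref{lemm: psc atomic and finite}. Hence the number of signatures is bounded by $N=\#\Ent(\prvar)\cdot\#\psclosure(\prsubst(\prvar),\FV(\prvar))$, and I fix $d=N$.

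I would then traverse $\prd{d}$ from the root. Because $\prd{d}$ contains no $(\subst)$ until depth $d$, every occurrence at depth at most $d$ has $\fd{d}(e)\neq\epsilon$, and hence a well-defined signature. Along a branch, either an axiom leaf is met first, in which case the branch is kept unchanged, or the branch carries occurrences at all depths $0,1,\dots,N$, that is $N+1>N$ occurrences; by pigeonhole two of them, say $c$ at depth $i$ and $b$ at depth $j$ with $i<j\leq d$, share a signature, so that $\fd{d}(c)=\fd{d}(b)$ and $c\equiv b$. Taking the least such $j$ on each branch, I declare $b$ a bud with companion $c$ and discard the subtree strictly above $b$. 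Since $c\equiv b$, bud and companion are labelled by syntactically identical sequents, and $c$ is an internal node, so the pairing is a legitimate bud--companion relation. The resulting tree is finitely branching and every branch has length at most $d$, hence it is finite; equipped with these back-edges it is a $\CLK$ pre-proof $\prminus$, and it contains no $(\subst)$ because all its occurrences lie at depth $\leq d$, where $\prd{d}$ has none.

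It remains to set $\f=\fd{d}|_{\Ent(\prminus)}$, which lands in $\Ent(\prvar)$ since every surviving occurrence has $\fd{d}(e)\neq\epsilon$, and to verify the three conditions. The first is immediate from $\fd{d}(e_0)=e_0'$. The second holds by construction of the bud--companion pairs, for which $\fd{d}(b)=\fd{d}(c)$. For the third, each rule instance of $\prminus$ is a rule instance of $\prd{d}$---truncation removes whole subtrees but leaves the remaining inferences intact---so the required block of $(\subst)$ applications followed by $(R)$ in $\prvar$, together with trace preservation, is supplied directly by the third condition of Lemma~\ref{lemm: promega to prd}.

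The step I expect to be the main obstacle is the choice of bud--companion pairs, which must simultaneously yield a legal pre-proof and the correspondence needed later for the global trace condition; this is why the pairing matches \emph{signatures} rather than merely sequents. Requiring only $c\equiv b$ would already make $\prminus$ a pre-proof, but the cycle it creates need not project under $\f$ to a cycle of $\prvar$; insisting additionally on $\fd{d}(c)=\fd{d}(b)$ forces the path in $\prminus$ from companion to bud to map to a genuine cycle of $\prvar$ along which traces are preserved, which is precisely what later lets every infinite path of $\prminus$ inherit an infinitely progressing trace from $\prvar$. Making this work depends on the finiteness of the signature set, and hence on the atomicity of the substitutions through Lemma~\ref{lemm: psc atomic and finite}: without it the truncation need not terminate at a bounded depth.
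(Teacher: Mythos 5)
Your proposal is correct and follows essentially the same route as the paper: truncate $\prd{d}$ for $d$ exceeding $\#\Ent(\prvar)\cdot\#\psclosure(\prsubst(\prvar),\FV(\prvar))$, apply the pigeonhole principle to find two occurrences on each long branch with the same $\fd{d}$-image and the same sequent (your ``signature'' just packages the paper's two-step pigeonhole---first on $\fd{d}(e)$, then on the substitution $\theta$ with $e\equiv\fd{d}(e)\theta$---into one product count), form bud--companion pairs from them, and set $\f=\fd{d}$ on the surviving occurrences. Your explicit remark that matching $\fd{d}$-images (not merely sequents) is what makes the cycles project to cycles of $\prvar$ is exactly the point the paper relies on for the global trace condition.
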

\begin{proof}
Let $\prd{d}$ and $\fd{d}$ be a proof and a mapping constructed by Lemma~\ref{lemm: promega to prd}.
Consider $n=\#\psclosure(\prsubst(\prvar),\FV(\prvar))$ by Lemma~\ref{lemm: psc atomic and finite}.

Assume that $d> \#\img(\fd{d})\cdot n$ holds.
Such $d$ exists since $\img(\fd{d})\subseteq \Ent(\prvar)\cup\{\epsilon\}$ holds.
For any path $(e_i)_{0\leq i\leq d}$ in $\prd{d}$, for some sequent $e$ in $\prvar$, there exist at least $n+1$ indices $i$ with $0\leq i\leq d$ such that $\fd{d}(e_i)=e$.
Let $S$ denote the set of those indices.
$\fd{d}(e_i)\neq\epsilon$ holds for any $i\in S$ since $e_i$ is not a premise of $(\subst)$. 
Therefore, $e_i$ and $\fd{d}(e_i)\theta_i$ are the same sequents for some $\theta_i\in\psclosure(\prsubst(\prvar),\FV(\prvar))$ by the condition of $\fd{d}$.
The number of substitutions in $\psclosure(\prsubst(\prvar),\FV(\prvar))$ is $n$. 
Hence, there are the same sequents $e_i$ and $e_j$ for some  $i<j\in S$. 
By choosing $e_i$ as the bud and $e_j$ as its companion, we can construct $\prminus$.
Moreover, for each occurrence of a sequent $e$ in $\prminus$, there exists a corresponding occurrence of a sequent $e'$ in $\prd{d}$. 
Based on this correspondence, we define $\f(e)=\fd{d}(e')$. 
With the definition of $\f$, all conditions required by this lemma are satisfied.\qed
\end{proof}

From the above, we could construct the pre-proof $\prminus$ from a proof $\prvar$ via $\promega$ and $\prd{d}$.
Next, we mention that the pre-proof $\prminus$ is a proof. 
In other words, we show that the pre-proof $\prminus$ satisfies the global trace condition. 
First, define a \emph{corresponding path} in $\prvar$ for each path in $\prminus$.

\begin{definition}[Corresponding path]
Let $\prminus$ and $\f$ be a pre-proof and a mapping, respectively, constructed in Lemma~\ref{lemm: prd to prminus}.
Let $(e_i)_{0\leq i}$ be a path in $\prminus$ where $e_0$ is the conclusion of $\prminus$.
The path $(e'_j)_{0\leq j}$ in $\prvar$ defined as follows is called \emph{corresponding path} of $(e_i)_{0\leq i}$:
\begin{itemize}
    \item Let $e'_0= \f(e_0)$ be the conclusion of $\prvar$.
    \item Suppose that $e'_j=\f(e_i)$. 
    \begin{itemize}
        \item If $e_i$ is a bud and $e_{i+1}$ is its companion, then let $e'_j=\f(e_{i+1})$.
        \item If $e_{i+1}$ is the $n$-th premise in a rule instance
        \[\infer[(R)]{e_i}{\cdots&e_{i+1}&\cdots},\]
        then, by the condition of $\f$, there exists rule applications 
        \[
            \infer*[(\subst)\mbox{ and bud-companion $m$ times}]
                {\infer[]{e'_j=\fd{d}(e_i)}{\qquad\qquad}}
                    {\infer[(R)]{}{\cdots&\fd{d}(e_{i+1})&\cdots}},
        \]
        in $\prvar$ for some $m$ such that $\fd{d}(e_{i+1})$ is the $n$-th premise of $(R)$. In this application, the sequence $e'_j, \ldots{}, e'_{j+m}$ is uniquely determined since $(\subst)$ is a unary rule. Furthermore, we set $e'_{j+m+1}= f(e_{i+1})$.
    \end{itemize}
\end{itemize}    
\end{definition}

Using the corresponding path, we can refer to the relationship between the traces of $\prvar$ and $\prminus$.
\begin{lemma}
\label{lemm: trace from plus to minus}
Let $(e_i)_{0\leq i}$ be a path in $\prminus$ constructed in Lemma~\ref{lemm: prd to prminus} and $(e'_j)_{0\leq j}$ be the corresponding path of $(e_i)_{0\leq i}$ in $\prvar$.
If there is an infinitely progressing trace $(C'_j)_{k'\leq j}$ following $(e'_j)_{0\leq j}$, the infinitely progressing trace
 $(C_i)_{k\leq i}$ exists in $(e_i)_{0\leq i}$. 
\end{lemma}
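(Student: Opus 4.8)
The plan is to transport the infinitely progressing trace $(C'_j)_{k'\leq j}$ of $\prvar$ back along the correspondence and thereby obtain an infinitely progressing trace in $\prminus$, exploiting the block structure of the corresponding path together with the trace-preservation clause of Lemma~\ref{lemm: prd to prminus}. Recall that the corresponding path is assembled one block at a time: each rule-instance step $e_i\to e_{i+1}$ of the $\prminus$ path (coming from an instance of some rule $(R)$) contributes a finite block of $\prvar$, namely the sequence $e'_j=\f(e_i),\ldots,e'_{j+m+1}=\f(e_{i+1})$ consisting of $m$ applications of $(\subst)$ and bud--companion steps followed by the instance of $(R)$, whereas each bud--companion step of $\prminus$ contributes no block, since $\f(e_i)=\f(e_{i+1})$ there. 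First I would observe that the $\prminus$ path carries infinitely many rule-instance steps: a bud is a leaf and a companion is an internal node, so no two consecutive steps of the path are bud--companion steps, and hence infinitely many blocks occur along the corresponding path.

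Next I would build the trace $(C_i)$ block by block. Discarding the finitely many blocks lying entirely below the index $k'$, I would restrict $(C'_j)$ to each remaining block and invoke the preservation clause of Lemma~\ref{lemm: prd to prminus}: for the block attached to the instance of $(R)$ whose conclusion is $e_i$ and one of whose premises is $e_{i+1}$, the preserved trace supplies inductive-predicate occurrences $C_i$ in $e_i$ and $C_{i+1}$ in $e_{i+1}$ that match the $\prvar$ trace at the two endpoints $\f(e_i)$ and $\f(e_{i+1})$ of the block. Because the top endpoint of one block and the bottom endpoint of the next denote the same occurrence $\f(e_{i+1})$ in $\prvar$, and both are carried to the single occurrence $e_{i+1}$ of $\prminus$, the transported occurrences agree at the shared sequent, so the segments glue into one coherent trace along the $\prminus$ path. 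At each bud--companion step of $\prminus$ the trace continues through the identity correspondence, since a bud and its companion are syntactically identical sequents.

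Finally I would count progressing points. Every progressing point of $(C'_j)$ occurs at an instance of $(\ul)$ that unfolds the traced predicate, and within a block only the terminal rule $(R)$ can be $(\ul)$, because the $(\subst)$ and bud--companion steps never unfold an inductive predicate; hence each block contains at most one progressing point of the $\prvar$ trace. Since $(C'_j)$ is infinitely progressing, infinitely many blocks carry such a point, and by the preservation clause each one is matched by a progressing point of $(C_i)$ at the corresponding $(\ul)$ instance of $\prminus$, so $(C_i)$ is infinitely progressing. I expect the main obstacle to be precisely this bookkeeping across block boundaries: one must verify that the per-instance preservation clause yields endpoint-compatible segments that glue into a single global trace, and that the ``at most one progressing point per block'' bound prevents progressing points from being collapsed or lost in passing from $\prvar$ to $\prminus$. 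Taking the starting index $k$ of $(C_i)$ to be the foot of the first block lying at or above $k'$ disposes of the initial partial block.
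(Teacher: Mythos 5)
Your proposal is correct and follows essentially the same route as the paper's own (much terser) proof: transport the trace backwards using the trace-preservation clause of Lemma~\ref{lemm: prd to prminus} at each rule instance, and observe that at a bud--companion step $\f(e_i)=\f(e_{i+1})$, so the trace does not break there. Your block decomposition and the explicit count of progressing points per block just spell out the details that the paper leaves implicit.
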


Based on Lemma~\ref{lemm: prplus to prvar} and~\ref{lemm: prd to prminus}, we can construct a pre-proof without the substitution rule. Lemma~\ref{lemm: trace from plus to minus} guarantees that the pre-proof satisfies the global trace condition; hence, the following main theorem holds.
\begin{theorem}[Admissibility of substitution rules in $\CLK$]
\label{thm: adm subst in CSL}
If $\Gamma\vdash\Delta$ is provable in $\CLK$, then it is provable in $\CLK$ without $(\subst)$.
\end{theorem}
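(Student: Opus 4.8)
The plan is to assemble the lemmas established above into a single pipeline $\prplus \to \prvar \to \promega \to \prd{d} \to \prminus$ and then to verify that the endpoint is a genuine proof. Suppose $\Gamma \vdash \Delta$ is provable in $\CLK$, witnessed by a proof $\prplus$. First I would apply Lemma~\ref{lemm: prplus to prvar} to obtain a $\CLK$ proof $\prvar$ of the same sequent in which every substitution is atomic; this step is where the cut rule is essential, and it is needed so that the finiteness of the partial-substitution closure (Lemma~\ref{lemm: psc atomic and finite}) becomes available downstream.

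Next I would run the three stages of atomic-substitution elimination. Lemma~\ref{lemm: prplus to promega} unfolds $\prvar$ into an infinitary $\LKomega$ proof $\promega$ together with a back-map $\fomega$; Lemma~\ref{lemm: promega to prd} repeatedly lifts substitutions to obtain, for each $d$, an $\LKomega$ proof $\prd{d}$ that is substitution-free up to depth $d$, with a back-map $\fd{d}$; and Lemma~\ref{lemm: prd to prminus}, for a sufficiently large $d$ chosen by a pigeonhole count against $\#\img(\fd{d})$ and the finite closure $\psclosure(\prsubst(\prvar), \FV(\prvar))$, installs bud--companion edges to produce a substitution-free $\CLK$ pre-proof $\prminus$ with back-map $\f$.

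It then remains to upgrade $\prminus$ from a pre-proof to a proof, i.e.\ to verify the global trace condition. Here I would use the corresponding-path construction: given any infinite path $(e_i)_{0 \le i}$ in $\prminus$, its corresponding path $(e'_j)_{0 \le j}$ in $\prvar$ is again infinite (each step of $\prminus$ contributes at least one step), and since $\prvar$ inherits the global trace condition from $\prplus$, this corresponding path carries an infinitely progressing trace. Lemma~\ref{lemm: trace from plus to minus} then pulls that trace back along $\f$ to an infinitely progressing trace on $(e_i)_{0 \le i}$. As the path was arbitrary, $\prminus$ satisfies the global trace condition and is a $\CLK$ proof of $\Gamma \vdash \Delta$ with no instances of $(\subst)$, which is the claim.

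I expect the crux to be precisely this last trace-transport step, and the whole design of Lemma~\ref{lemm: prd to prminus} is aimed at making it go through. The real risk is that the reconstruction---unfolding, lifting substitutions, and re-closing cycles at depth $d$---might manufacture infinite paths in $\prminus$ that have no progressing trace. This is averted by demanding that a bud and its companion share the same value under $\fd{d}$, i.e.\ the same sequent occurrence of $\prvar$, so that every cycle of $\prminus$ projects onto an honest cycle of $\prvar$; the delicate points are that the corresponding path is well-defined on entire infinite paths and that progress points survive the insertion of the intermediate $(\subst)$ and bud--companion stretches rather than being lost in translation.
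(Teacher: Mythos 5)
Your proposal is correct and follows essentially the same route as the paper: the pipeline $\prplus\to\prvar\to\promega\to\prd{d}\to\prminus$ assembled from Lemmas~\ref{lemm: prplus to prvar}, \ref{lemm: prplus to promega}, \ref{lemm: promega to prd}, and~\ref{lemm: prd to prminus}, followed by the corresponding-path construction and Lemma~\ref{lemm: trace from plus to minus} to transport an infinitely progressing trace from $\prvar$ back to $\prminus$ and so verify the global trace condition. You also correctly identify the key safeguard --- requiring buds and companions to agree under $\fd{d}$ so that cycles in $\prminus$ project onto genuine cycles in $\prvar$ --- which is exactly how the paper avoids manufacturing non-progressing infinite paths.
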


In the process of elimination of atomic substitutions, we do not introduce any rule not present in $\prvar$, and hence we have the following stronger result.
\begin{theorem}
If $\pr$ is a proof of $\Gamma\vdash\Delta$ in $\CLK$ and $\prsubst(\pr)$ consists of only atomic substitutions, then there is a $(\subst)$-free proof of $\Gamma\vdash\Delta$ that contains only rules in $\pr$.
\end{theorem}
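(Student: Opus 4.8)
The plan is to observe that the hypothesis $\prsubst(\pr)$ consists only of atomic substitutions lets us bypass entirely the single step of the main construction that introduces new inference rules, namely the elimination of composite substitutions in Lemma~\ref{lemm: prplus to prvar}. That lemma is exactly where $(\cut)$, $(\existsR)$, $(\existsL)$, $(\eqR)$, $(\eqL)$, and $(\wk)$ are freshly introduced; every other step merely copies, duplicates, or re-parameterizes rule instances already present. Accordingly, I would take $\prvar := \pr$ directly, since $\pr$ already satisfies the standing assumption of the atomic-substitution phase, and then run the remaining construction of Theorem~\ref{thm: adm subst in CSL} verbatim.

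Concretely, I would first invoke Lemma~\ref{lemm: prplus to promega} to unfold $\prvar$ into the infinitary proof $\promega$; tree unfolding copies rule instances along the bud--companion relation and so introduces no rule absent from $\prvar$. Next I would apply Lemma~\ref{lemm: promega to prd} to produce, for each depth $d$, the proof $\prd{d}$ in which $(\subst)$ has been lifted above depth $d$. The decisive point for rule preservation is that each lifting step replaces a pattern of $(\subst)$-over-$(R)$ by a \emph{substitution application} of $(R)$, and by Lemma~\ref{lemm: subst-app property} (the substitution-application property for $\CLK$ and $\LKomega$) such a substitution application is again an instance of the \emph{same} rule $(R)$, with only the substitutions decorating the premises changing, all of which remain inside $\psclosure(\prsubst(\prvar),\FV(\prvar))$. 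This closure is finite by Lemma~\ref{lemm: psc atomic and finite}, so no new rule type is ever created during lifting. Finally, Lemma~\ref{lemm: prd to prminus} reconstructs a $(\subst)$-free cyclic pre-proof $\prminus$ by selecting buds and companions among sequent occurrences of $\prd{d}$ for sufficiently large $d$; this adds back edges but no rule instances, and Lemma~\ref{lemm: trace from plus to minus} certifies that $\prminus$ satisfies the global trace condition, so $\prminus$ is a genuine $(\subst)$-free $\CLK$ proof of $\Gamma\vdash\Delta$.

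The conclusion then follows by tracking which rules can appear. Every rule instance in $\promega$ is a copy of one in $\prvar=\pr$; every rule instance in each $\prd{d}$ is either inherited from $\promega$ or is a substitution application of such an instance, hence again an instance of the same rule; and $\prminus$ inherits all of its rule instances from $\prd{d}$ for the chosen $d$. Consequently, apart from $(\subst)$ itself---present in $\pr$ by hypothesis but removed from $\prminus$---no rule type occurs in $\prminus$ that is not already present in $\pr$. Thus $\prminus$ is a $(\subst)$-free $\CLK$ proof of $\Gamma\vdash\Delta$ containing only rules in $\pr$.

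I expect the main obstacle to be the rule-preservation bookkeeping across the substitution-application steps: one must confirm that the substitution-application property not only yields an instance of the same rule but does so \emph{without recourse to any auxiliary rule}, and that $\psclosure(\prsubst(\prvar),\FV(\prvar))$ stays finite even when constants---which Lemma~\ref{lemm: prplus to prvar} would otherwise have eliminated---are allowed to persist, so that the pigeonhole argument of Lemma~\ref{lemm: prd to prminus} still applies. Both points are already secured by Lemmas~\ref{lemm: subst-app property} and~\ref{lemm: psc atomic and finite}, so the only substantive change relative to the proof of Theorem~\ref{thm: adm subst in CSL} is the observation that skipping the composite-substitution phase keeps the rule set unchanged.
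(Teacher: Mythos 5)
Your proposal is correct and follows essentially the same route as the paper: the paper obtains this theorem precisely by observing that the atomic-substitution elimination pipeline ($\prvar\to\promega\to\prd{d}\to\prminus$) never introduces a rule absent from $\prvar$, so one simply takes $\prvar:=\pr$ and skips the composite-substitution phase of Lemma~\ref{lemm: prplus to prvar}, which is the only step that adds new rules. Your supporting observations (substitution applications stay within the same rule by Lemma~\ref{lemm: subst-app property}, and the closure remains finite with constants by Lemma~\ref{lemm: psc atomic and finite}) match the paper's justification.
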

\label{sec: admissibility subst in LK}
\section{Admissibility of the Substitution Rule in Other Cyclic-Proof Systems}
\label{sec: admissibility subst in others}

In this section, we discuss the conditions under which our procedure for the substitution elimination can be applied in general cyclic-proof systems.
Among the lemmas proved in Section~\ref{sec: admissibility subst in LK}, those that depend on the proof system $\CLK$ are Lemma~\ref{lemm: prplus to prvar}~and~\ref{lemm: subst-app property}.

First, we discuss Lemma~\ref{lemm: prplus to prvar}, which concerns the elimination of composite substitutions.
This lemma claims the existence of a proof that contains only atomic substitutions from the existence of a proof that allows composite substitutions. 
To prove this lemma, we use inference rules $(\cut)$, $(\eqL)$, $(\eqR)$, $(\existsL)$, and $(\existsR)$.
Accordingly, in proof systems that do not include these rules, this lemma does not, in general, hold.

In fact, we can consider a system in which $(\subst)$ is not admissible.
For example, let $P(x)$ be an inductive predicate defined by $P(x):= P(sx)$. The unfolding rule of the left $P(x)$ in $\CLK$ is 
\[
\infer[(\ul)]{P(x),\Gamma\vdash\Delta}{y=x,P(sy),\Gamma\vdash\Delta}.
\]

However, we can consider another unfolding rule
\[
\infer[(\ul')]{P(x),\Gamma\vdash\Delta}{P(sx),\Gamma\vdash\Delta}.
\]
Let we think a proof system with only $(\ul')$, $(\ur)$, $(\axiom)$, $(\wk)$, and $(\subst)$. 
There is a proof 
\begin{center}
\includegraphics[width=0.25\linewidth]{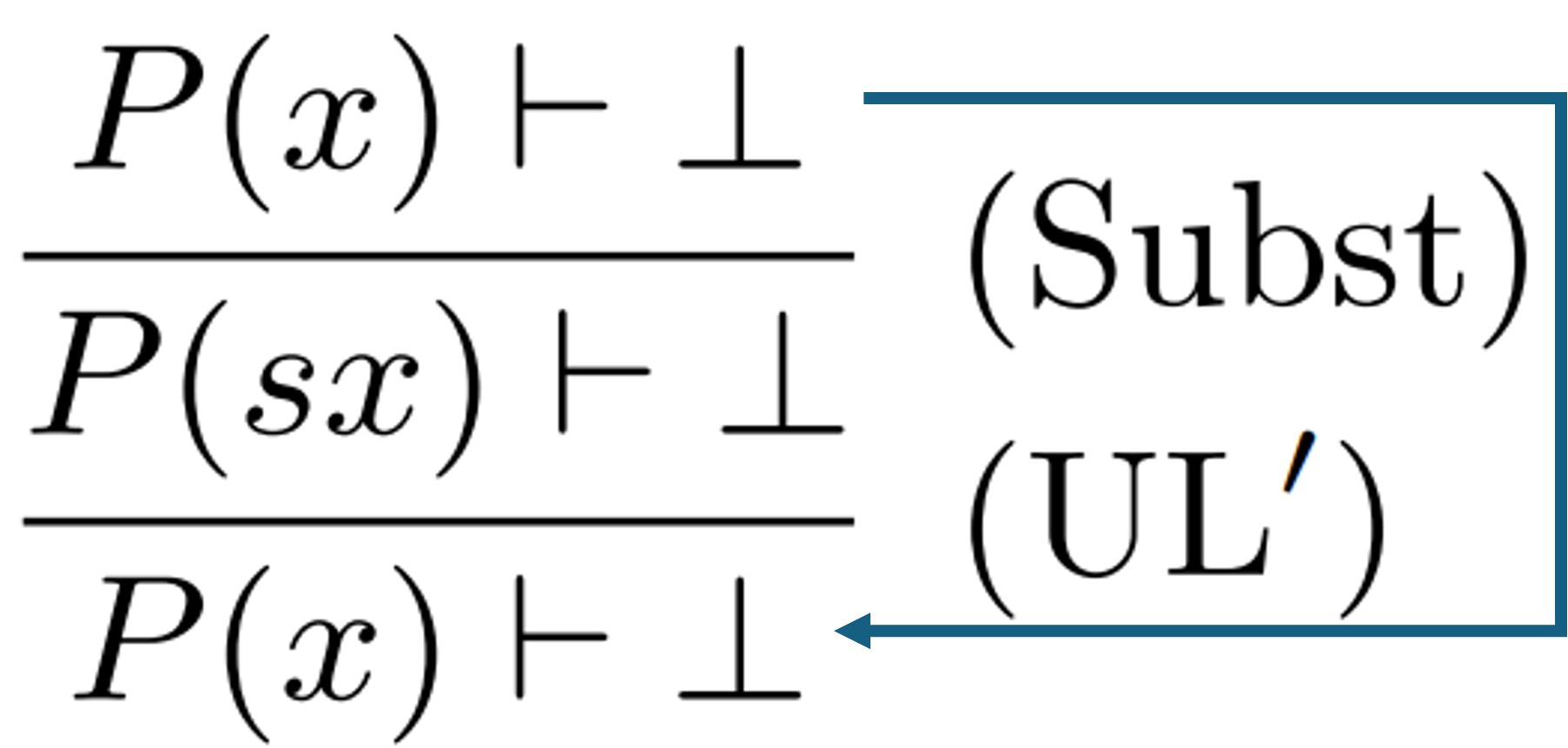}    
\end{center}
of the sequent $P(x)\vdash\bot$.
However, we cannot prove this sequent without $(\subst)$ in this system.

Note that in Lemma~\ref{lemm: prplus to prvar} for $\CLK$, the cut rule is only used to introduce $y=t$ (where $y$ is fresh and $t$ is arbitrary). Therefore, when we introduce the following rule in place of the cut rule, $(\subst)$ is still admissible
\[
\infer[(fresh~L)]{\Gamma\vdash\Delta}{y=t,\Gamma\vdash\Delta},
\]
where $t$ is arbitrary and $y$ is fresh in $\Gamma\vdash\Delta$ and $t$.

Secondly, we discuss Lemma~\ref{lemm: subst-app property}. This lemma plays an important role in the elimination of atomic substitutions.
This lemma guarantees that $\CLK$ and $\LKomega$ satisfy the substitution-application property with atomic substitutions. 
This lemma holds for many general proof systems, including the proof system $\CLK$ and the cyclic proof system for separation logic. 
Therefore, as long as we can assume a proof with only atomic substitutions, the substitution rule can be eliminated in many proof systems.
Let $\CLK_{-}$ be a proof system $\CLK$ without the cut rule.

\begin{theorem}[Admissibility of atomic substitution rules in $\CLK_{-}$]
\label{thm: adm subst in LK}
If $\pr$ is a proof of $\Gamma\vdash\Delta$ in $\CLK_{-}$ and $\prsubst(\pr)$ consists of only atomic substitutions, then there is a $(\subst)$-free $\CLK_{-}$ proof of $\Gamma\vdash\Delta$ that contains only rules in $\pr$.
\end{theorem}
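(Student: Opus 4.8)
The plan is to reuse the atomic-substitution elimination pipeline of Section~\ref{sec: admissibility subst in LK} essentially verbatim, observing that every step in it is already valid in the cut-free setting and introduces no new inference rules. Recall that the only place where the cut rule was genuinely needed is Lemma~\ref{lemm: prplus to prvar}, which converts composite substitutions into atomic ones. Since the hypothesis of the present theorem already guarantees that $\prsubst(\pr)$ consists solely of atomic substitutions, this step is simply skipped: we may take $\pr$ itself in the role of $\prvar$.

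First I would check that $\CLK_{-}$ and its infinitary counterpart satisfy the substitution-application property. This is immediate from Lemma~\ref{lemm: subst-app property}: that lemma is proved by a rule-by-rule analysis, and since the rules of $\CLK_{-}$ form a subset of those of $\CLK$ (the cut rule being the only omission), the property is inherited. Crucially, the proof of Lemma~\ref{lemm: subst-app property} handles the fresh-variable-introducing rules such as $(\ul)$ purely by renaming within the partial-substitution closure, and never invokes the cut rule, so nothing in it is lost when passing to the cut-free fragment.

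With this in hand, I would run the three-step construction on $\pr=\prvar$. Applying Lemma~\ref{lemm: prplus to promega} yields an $\LKomega$ pre-proof $\promega$ by tree unfolding; since unfolding only copies the rules already present in $\pr$, the resulting $\promega$ is cut-free and uses only rules in $\pr$. Next, Lemma~\ref{lemm: promega to prd} lifts the substitution rules to depth $d$, producing $\prd{d}$; each lifting step rewrites a block consisting of zero or more $(\subst)$'s followed by a rule $(R)$ using only a substitution application of the same rule $(R)$, so neither the cut rule nor any new rule is introduced. Finally, Lemma~\ref{lemm: prd to prminus} places back edges at a sufficiently large depth $d>\#\img(\fd{d})\cdot\#\psclosure(\prsubst(\prvar),\FV(\prvar))$, which is finite by Lemma~\ref{lemm: psc atomic and finite}; this step merely selects bud--companion pairs and adds no rule at all. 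The resulting $\prminus$ is therefore a $(\subst)$-free $\CLK_{-}$ pre-proof containing only rules in $\pr$, and Lemma~\ref{lemm: trace from plus to minus} certifies that it satisfies the global trace condition, hence is a genuine $\CLK_{-}$ proof.

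The main obstacle, and the only real content beyond bookkeeping, is verifying the two invariants that make the conclusion stronger than bare admissibility: that the cut rule is never reintroduced, and that no rule outside $\pr$ ever appears. The delicate case is the lifting step, where one must confirm that the substitution application supplied for a rule $(R)$ is again an instance of the very same $(R)$, rather than of some auxiliary rule, and that pushing the freshly introduced $(\subst)$'s upward never forces a structural rule absent from the cut-free fragment. Both facts follow directly from the explicit form of the substitution application constructed in the proof of Lemma~\ref{lemm: subst-app property}, so no additional machinery is required and the theorem reduces to tracking these invariants through the unchanged pipeline.
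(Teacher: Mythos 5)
Your proposal is correct and follows essentially the same route as the paper: since the hypothesis already supplies a proof with only atomic substitutions, the cut-dependent Lemma~\ref{lemm: prplus to prvar} is skipped, and the remaining pipeline (Lemmas~\ref{lemm: prplus to promega}, \ref{lemm: promega to prd}, \ref{lemm: prd to prminus}, and \ref{lemm: trace from plus to minus}) applies unchanged because Lemma~\ref{lemm: subst-app property} and the lifting construction never invoke the cut rule or any rule absent from $\pr$. Your explicit tracking of the ``no new rules introduced'' invariant is exactly the observation the paper relies on for the stronger conclusion.
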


In particular, every proof in the cyclic-proof system $\CSLomega$~\cite{Brotherston11b} for the separation logic contains only variable application since the separation logic contains no function symbols. 
Hence, the following theorems hold. Let $\CSLomega_{-}$ be $\CSLomega$ without the cut rule.
\begin{theorem}
\label{thm: adm subst in CSLM}
$(\subst)$ is admissible in both $\CSLomega$ and $\CSLomega_{-}$.
\end{theorem}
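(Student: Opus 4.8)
The plan is to observe that separation logic removes exactly the feature that forced the two system-dependent steps of Section~\ref{sec: admissibility subst in LK}, so that the remaining, system-independent atomic-elimination pipeline can be run unchanged. Concretely, the signature of separation logic contains no function symbol of positive arity; the only constant is $\mathrm{nil}$, so every term is a variable or $\mathrm{nil}$. Consequently, for any proof $\pr$ in $\CSLomega$ or $\CSLomega_{-}$, each substitution in $\prsubst(\pr)$ has its image contained in the variables together with $\mathrm{nil}$, and is therefore atomic by definition. This makes the passage from $\prplus$ to $\prvar$ vacuous: one may take $\prvar=\pr$ directly, and in particular one never needs the cut rule used in Lemma~\ref{lemm: prplus to prvar}. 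This is precisely why the result will hold for $\CSLomega_{-}$ as well as for $\CSLomega$.

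First I would establish the analog of Lemma~\ref{lemm: subst-app property} for the two separation-logic systems: each satisfies the substitution-application property for atomic substitutions. The argument transfers rule by rule. For any rule introducing no fresh variable one takes $\theta_i=\theta$; for a rule that introduces fresh variables (the left-unfolding rule $(\ul)$ for the inductive spatial predicates, and any existential-witness rule present in the calculus) one uses atomicity of $\theta$ exactly as in the $(\ul)$ case of Lemma~\ref{lemm: subst-app property}: only finitely many variables are relevant, the cardinality $\#\FV$ of the conclusion bounds the number of variables occupied by the image of $\theta$, so one can rename the freshly introduced variables to new fresh ones and absorb this renaming inside $\psclosure(\{\theta\},X)$ while preserving all traces. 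Since Lemma~\ref{lemm: psc atomic and finite} guarantees that the partial-substitution closure of a finite atomic set stays atomic and finite, the whole apparatus remains available.

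With atomicity for free and the substitution-application property in hand, I would run the three steps of the atomic-substitution elimination verbatim, observing that Lemmas~\ref{lemm: prplus to promega},~\ref{lemm: promega to prd}, and~\ref{lemm: prd to prminus} refer to the ambient system only through the substitution-application property, the finiteness of $\psclosure$, and the tree-unfolding of cycles, none of which is special to $\CLK$. Thus I unfold the cycles of $\pr$ into an infinitary proof $\promega$ together with $\fomega$; lift the substitution rules upward to arbitrary depth $d$ to obtain $\prd{d}$ with $\fd{d}$, using substitution applications and keeping $\prsubst(\prd{d})\subseteq\psclosure(\prsubst(\pr),\FV(\pr))$; and finally, choosing $d$ larger than $\#\img(\fd{d})\cdot\#\psclosure(\prsubst(\pr),\FV(\pr))$, apply the pigeonhole argument of Lemma~\ref{lemm: prd to prminus} to locate, on every path, two occurrences $e_i\equiv e_j$ with $\fd{d}(e_i)=\fd{d}(e_j)$, installing them as bud and companion to form a substitution-free pre-proof $\prminus$ with its mapping $\f$. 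Because this pipeline adds no inference rule beyond those already in $\pr$, and in particular no cut, it applies identically to $\CSLomega$ and $\CSLomega_{-}$.

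It remains to argue soundness of $\prminus$, which I expect to require the most care: I would define the corresponding path in $\pr$ for each path of $\prminus$ through $\f$ and invoke the trace-transfer argument of Lemma~\ref{lemm: trace from plus to minus}, which lifts any infinitely progressing trace along a corresponding path in $\pr$ back to an infinitely progressing trace in $\prminus$. Since $\pr$ satisfies the global trace condition, so does $\prminus$, which is therefore a genuine $\CSLomega$ (respectively $\CSLomega_{-}$) proof without $(\subst)$. The main obstacle is thus not the logical skeleton, which is inherited wholesale from Section~\ref{sec: admissibility subst in LK}, but the rule-by-rule bookkeeping confirming that every separation-logic inference rule, especially the spatial rules governing the separating connectives and the freshness side conditions of predicate unfolding, genuinely admits trace-preserving substitution applications; once that check is complete the theorem follows.
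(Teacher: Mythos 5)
Your proposal is correct and follows essentially the same route as the paper: the key observation in both is that separation logic has no function symbols of positive arity, so every substitution is automatically atomic, the composite-elimination step (the only place the cut rule is used) becomes vacuous, and the remaining atomic-elimination pipeline depends on the ambient system only through the substitution-application property, which transfers to the separation-logic rules. The paper states this very tersely, and your write-up simply fills in the same argument in more detail.
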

\section{Conclusion}
In this paper, we prove the admissibility of the substitution rule in $\CLK$.
After that, we discuss generalization to other proof systems.

One possible direction for future work is investigating the admissibility of the substitution rule in cut-free $\CLK$ without $(fresh~L)$.
Section~\ref{sec: admissibility subst in others} showed that replacing $(UL)$ with $(UL')$ and restricting inference rules destroy this admissibility.
As a counterexample, we considered $P(x)\vdash\bot$; however, in the usual $\CLK$ system, there still be a cut-free proof 
\begin{center}
\includegraphics[width=0.32\linewidth]{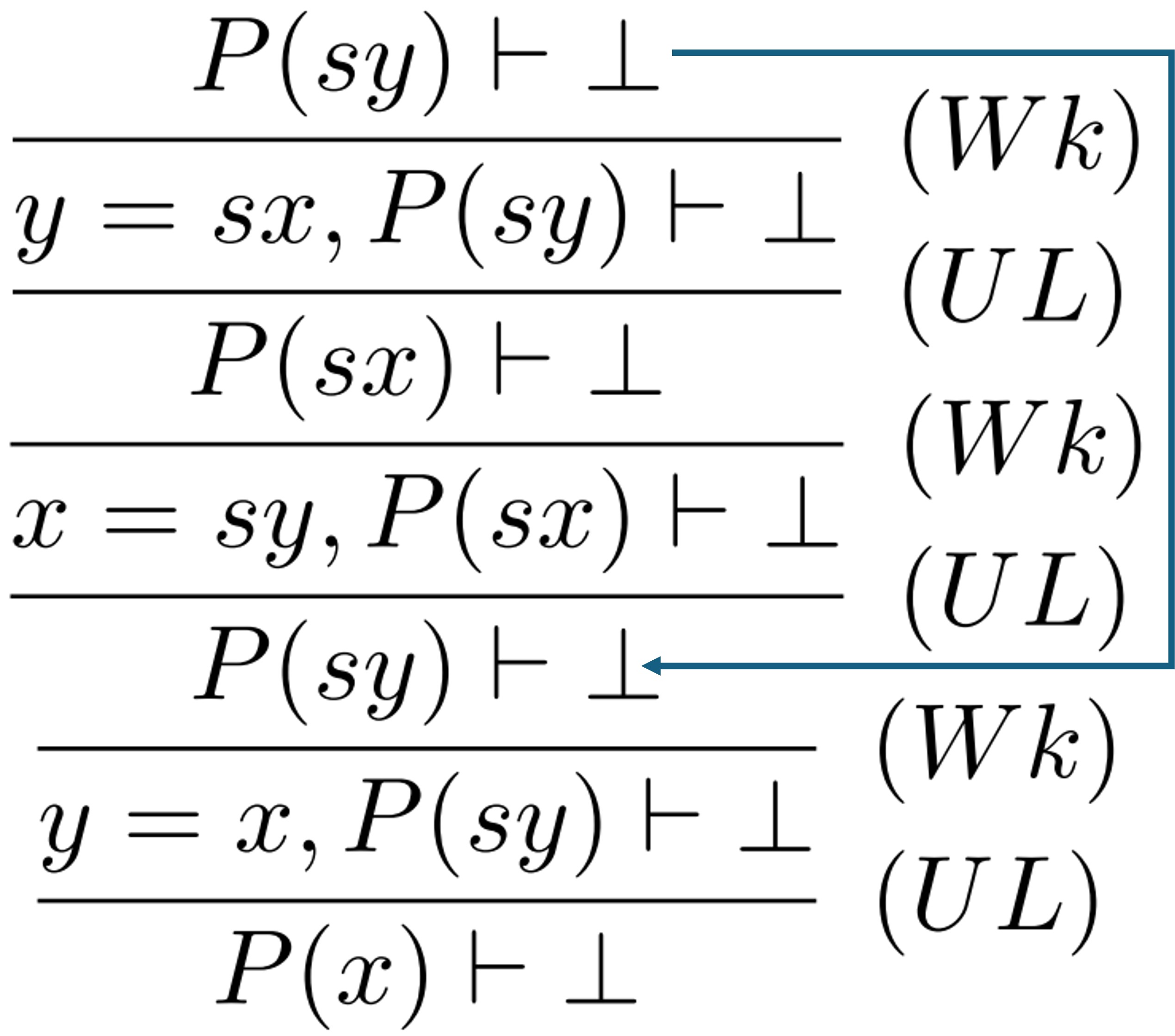}    
\end{center}
of this sequent.
The reason is that $(UL)$ effectively plays the role of $(fresh~L)$.
Our interest lies in identifying the minimal set of rules that still allows the elimination of the substitution rule in the presence of function symbols.

Another possible direction is to study the cut-elimination property for cyclic proof systems by introducing additional rules.
The fact that the substitution rule could be eliminated from cut-free $\CLK$ through the introduction of $(fresh~L)$ provides an important insight.
This suggests that, by adding rules sufficient to construct cycles, it may be possible to achieve the elimination of the cut rule.

\begin{credits}
\subsubsection{\ackname} 
The second author was supported by JSPS KAKENHI Grant Number 22K11901.
\end{credits}

%
%
%
%
\newpage
\bibliographystyle{splncs04}
\bibliography{bib}
\appendix
\section{Details of Proofs in Section~\ref{sec: admissibility subst in LK}}
In this section, we show the details of proofs in Section~\ref{sec: admissibility subst in LK}.
\subsection{Lemma~\ref{lemm: psc atomic and finite}}
\begin{proof}
Since any function symbols not contained in $\Theta$ are likewise absent from $\psclosure(\Theta, X)$, any substitution in $\psclosure(\Theta_{var}, X)$ is atomic.

Let $\dom(\Theta)$ and $\img(\Theta)$ denote, respectively, the unions of $\dom(\theta)$ and $\img(\theta)$ over all $\theta\in\Theta$. 
The domain of every substitution in $\psclosure(\Theta_{var}, X)$ is a subset of $\dom(\Theta)\cup X$.
Similary, the range of every substitution in $\psclosure(\Theta_{var}, X)$ is a subset of $\img(\Theta)\cup X$. 
Hence, the value of $\#\psclosure(\Theta_{var}, X)$ is at most $\#(\img(\Theta)\cup X)^{\#(\dom(\Theta)\cup X)}$ and $\#\psclosure(\Theta_{var}, X)$ is finite.\qed
\end{proof}

\subsection{Lemma~\ref{lemm: trace from plus to minus}}
\begin{proof}
By the property of $\f$, since traces are preserved at each rule application, if there exists an infinitely progressing trace along with $(e'_j)_{0\leq j}$, there also exists the corresponding infinitely progressing trace along with $(e_i)_{0\leq i}$. Note that when $e_i$ and $e_{i+1}$ on $\prminus$ are respectively a bud and its companion, we have $f(e_i) = f(e_{i+1})$, and hence, at this point, the trace does not break.\qed 
\end{proof}

\subsection{Theorem~\ref{thm: adm subst in LK}}
\begin{proof}
Assume that $\prvar$ is a $\CLK$ proof of $\Gamma\vdash\Delta$ s.t. any substitution in $\prsubst(\prvar)$ is an atomic substitution. By Lemma~\ref{lemm: prplus to prvar}, the existence of such a proof $\prvar$ is guaranteed.

By Lemma~\ref{lemm: prd to prminus}, we can construct the pre-proof $\prminus$ of $\Gamma\vdash\Delta$ that does not contain the application of $(\subst)$ and the function $\f$. 
Let $(e_i)_{0\leq i}$ be an infinite path in $\prminus$. 
There is a corresponding path $(e'_j)_{0\leq j}$ in $\prvar$. 
$(e'_j)_{0\leq j}$ is an infinite path since $(e_i)_{0\leq i}$ is an infinite path. Since $\prvar$ is proof, there is an infinitely progressing trace along with $(e'_j)_{0\leq j}$. By Lemma~\ref{lemm: trace from plus to minus}, there is an infinitely progressing trace along with $(e_i)_{0\leq i}$. 
Assuming an arbitrary infinite path on $\prminus$, there exists an infinitely progressing trace. 
Therefore, $\prminus$ satisfies the global trace condition. 
Consequently, $\prminus$ is a proof, and we have established the existence of a proof of $\Gamma\vdash \Delta$ that does not contain the application of $(\subst)$.
\qed
\end{proof}
\end{document}